\newcommand*{\myTitle}{Monitoring of Domain-Related Problems in Distributed Data Streams}
\crefname{step}{Step}{Steps}
\newcommand{\set}[1]{\left\{ #1 \right\}}
\newcommand{\paren}[1]{\left( #1 \right)}
\newcommand{\bigO}{\mathcal{O}}
\newcommand{\Prob}[1]{\ensuremath{\Pr \left[ #1 \right]}}
\newcommand{\E}{\mathbb{E}}
\newtheorem{theorem}{Theorem}[section]
\newtheorem{lemma}[theorem]{Lemma}
\newtheorem{corollary}[theorem]{Corollary}
\title{\myTitle\thanks{This work was partially supported by the German Research Foundation (DFG) within the Priority Program ``Algorithms for Big Data'' (SPP 1736) and by the Federal Ministry of Education and Research (BMBF) as part of the poject ``Resilience by Spontaneous Volunteers Networks for Coping with Emergencies and Disaster'' (RESIBES), (grant no 13N13955 to 13N13957).}}
\author{Pascal Bemmann \and Felix Biermeier \and Jan B\"urmann \and Arne Kemper \and Till Knollmann \and Steffen Knorr \and Nils Kothe \and Alexander M\"acker \and Manuel Malatyali \and Friedhelm Meyer auf der Heide \and S\"oren Riechers \and Johannes Schaefer \and Jannik Sundermeier
	\\ [0.4em]
	Heinz Nixdorf Institute \& 	Computer Science Department\\
	Paderborn University, Germany\\[0.2em]
	\{pbemmann, felixbm, jbuerman, kempera, tillk, stknorr,\\
	nkothe, amaecker, malatya, fmadh, soerenri, jschaef, janniksu\}\\
	@mail.uni-paderborn.de
}
\date{} 
\newcommand{\comment}[2]{}
\renewcommand{\todo}[1]{\comment{todo}{#1}}
\newcommand{\ntv}{N_t^v}
\newcommand{\ntvsize}{|\ntv|}
\newcommand{\edapproximation}{($\varepsilon$,$\delta$)-approximation\xspace}
\newcommand{\protone}{\textsc{ConstantResponse}}
\newcommand{\CoFac}{\textsc{ConstantFactorApproximation}}
\newcommand{\CoFacS}{\CoFac\xspace}
\newcommand{\ntildev}{\tilde n^v}
\newcommand{\nbarv}{\bar n^v}
\newcommand{\EpsFac}{\textsc{EpsilonFactorApprox}}
\newcommand{\EpsFacS}{\EpsFac\xspace}
\newcommand{\ContEpsFac}{\textsc{ContinuousEpsilonApprox}}
\newcommand{\ContEpsFacS}{\ContEpsFac\xspace}
\newcommand{\DoMon}{\textsc{DomainMonitoring}}
\newcommand{\DoMonS}{\DoMon\xspace}
\begin{document}
	
\maketitle
	
\begin{abstract}
Consider a network in which $n$ distributed nodes are connected to a single server.
Each node continuously observes a data stream consisting of one value per discrete time step. 
The server has to continuously monitor a given parameter defined over all information available at the distributed nodes.
That is, in any time step $t$, it has to compute an output based on all values currently observed across all streams.
To do so, nodes can send messages to the server and the server can broadcast messages to the nodes.
The objective is the minimisation of communication while allowing the server to compute the desired output.

We consider monitoring problems related to the domain $D_t$ defined to be the set of values observed by at least one node at time $t$.
We provide randomised algorithms for monitoring $D_t$, (approximations of) the size $|D_t|$ and the frequencies of all members of $D_t$.
Besides worst-case bounds, we also obtain improved results when inputs are parameterised according to the similarity of observations between consecutive time steps.
This parameterisation allows to exclude inputs with rapid and heavy changes, which usually lead to the worst-case bounds but might be rather artificial in certain scenarios.
\end{abstract}

\section{Introduction}
Consider a system consisting of a huge amount of nodes such as a distributed sensor network.
Each node continuously observes its environment and measures information such as temperature, pollution or similar parameters.
Given such a system, we are interested in aggregating information and continuously monitoring properties describing the current status of the system at a central server.
To keep the server's information up to date, the server and the nodes can communicate with each other.
In sensor networks, however, the amount of such communication is particularly crucial, as communication translates to energy consumption, which determines the overall lifetime of the network due to limited battery capacities.
Therefore, algorithms aim at minimizing the communication required for monitoring the respective parameter at the server.

One very basic parameter is the domain of the system defined to be the values currently observed across all nodes.
We consider different notions related to the domain and propose algorithms for monitoring the domain itself, (approximations of) its size and (approximations of) the frequencies of values comprising the domain, respectively.
Each of these parameters can provide useful information, e.g.
the information about the (approximated) frequency of each value allows to approximate very precisely the histogram of the observed values, and
this allows to determine (approximations of) several functions of the input, e.g.\ heavy hitters, quantiles, top-$k$, frequency moments or threshold problems.

\subsection{Model and Problems} 
\label{DR:section:goal}
We consider the continuous distributed monitoring setting, introduced by Cormode, Muthukrishnan, and Yi in \cite{cormodeModel}, in which there are $n$ distributed nodes, each uniquely identified by an identifier (ID) from the set $ \set{1,\dots,n}$, connected to a single server.
Each node observes a stream of values over time and at any discrete time step $t$ node $i$ observes one value $v_i^t \in \set{1,\dots,\Delta}$.
The server is asked to, at any point $t$ in time, compute an output $f(t)$ which depends on the values $v_i^{t'}$ (for $t' \leq t$, and $i = 1, \ldots, n$) observed across all distributed streams up to the current time step $t$.
The exact definition of $f(\cdot)$ depends on the concrete problems under consideration, which are defined in the section below.
For the solution of these problems, we are usually interested in approximation algorithms.
An $\varepsilon$-approximation of $f(t)$ is an output $\tilde f(t)$ of the server such that $(1-\varepsilon)f(t) \leq \tilde f(t) \leq (1+\varepsilon)f(t)$.
We call an algorithm that, for each time step, provides an $\varepsilon$-approximation with probability at least $1-\delta$, an $(\varepsilon, \delta)$-approximation algorithm.
To be able to compute the output, the nodes and the server can communicate with each other by exchanging single cast messages or by broadcast messages sent by the server and received by all nodes.
Both types of communication are instantaneous and have unit cost per message.
That is, sending a single message to one specific node incurs cost of one and so does one broadcast message.
Each message has a size of $O(\log \Delta + \log n + \log\log\frac{1}{\delta})$ bits and will usually, besides a constant number of control bits, consist of a value from $\{1, \ldots, \Delta\}$, a node ID and an identifier to distinguish between messages of different instances of an algorithm applied in parallel (as done when using standard probability amplification techniques).
Having a broadcast channel is an extension to \cite{cormodeModel}, which was originally proposed in \cite{cormodeBroadcast} and afterwards applied in \cite{ipdps1,ipdps2}.
For ease of presentation, we assume that not only the server can send broadcast messages, but also the nodes.
This changes the communication cost only by a factor of at most two, as a broadcast by a node can always be implemented by a single cast message followed by a broadcast of the server.
Between any two time steps we allow a communication protocol to take place, which may use polylogarithmic $\bigO(\log^c n)$ rounds, for some constant $c$.
The optimisation goal is the minimisation of the communication cost, given by the number of exchanged messages, required to monitor the considered problem.

\subsubsection{Monitoring of Domain-Related Functions.}
In this paper, we consider the monitoring of different problems related to the \emph{domain} of the network.
The domain at time $t$ is defined as $D_t \coloneqq \{ v \in \{1, \ldots, \Delta \} \mid \exists i \text{ with } v_i^t = v \}$, the set of values observed by at least one node at time $t$.
We study the following three problems related to the domain:
\begin{itemize}[noitemsep]
    \item[$\bullet$] \textbf{Domain Monitoring.}
    At any point in time, the server needs to know the domain of the system as well as a \emph{representative} node for each value of the domain.
    Formally, monitor $D_t = \{v_1, \ldots, v_{|D_t|}\} \subseteq \{1,\ldots,\Delta\}$, at any point $t$ in time.
    Also, maintain a sequence $R_t = (j_1, \ldots, j_\Delta)$ of nodes such that for all observed values $v \in D_t$ a representative $i$ is determined with $j_v = i$ and $v_i^t = v$. 
    For each value $v \notin D_t$ which is not observed, no representative is given and $j_v = \text{nil}$.
    \item[$\bullet$] \textbf{Frequency Monitoring.} For each $v \in D_t$ monitor the frequency $|N_t^v|$ of nodes in $N_t^v \coloneqq \{i \in \{1,\ldots, n\}  \mid v^t_i = v\}$ that observed $v$ at $t$, i.e.\ the number of nodes currently observing $v$.
    \item[$\bullet$] \textbf{Count Distinct Monitoring.} Monitor $|D_t|$, i.e. the number of distinct values observed at time t.
\end{itemize}
We provide an exact algorithm for the Domain Monitoring Problem and $(\varepsilon, \delta)$-approximations for the Frequency and Count Distinct Monitoring Problem.

\subsection{Our Contribution}
For the Domain Monitoring Problem, an algorithm which uses $\Theta(\sum_{t \in T}|D_t|)$ messages on expectation for $T$ time steps is given in \cref{se:domain}.
This is asymptotically optimal in the worst-case in which $D_t \cap D_{t+1} = \emptyset$ holds for all $t \in T$.
We also provide an algorithm and an analysis based on the minimum possible number $R^*$ of changes of representatives for a given input.
It exploits situations where $D_t \cap D_{t+1} \neq \emptyset$ and uses $\bigO(\log n \cdot R^*)$ messages on expectation.

For an \edapproximation of the Frequency Monitoring Problem for $T$ time steps, we first provide an algorithm using $\Theta(\sum_{t \in T}|D_t|\frac{1}{\varepsilon^2} \log \frac{|D_t|}{\delta})$ messages on expectation in \cref{sec:frequencies}.
We then improve this bound for instances in which observations between consecutive steps have a certain similarity.
That is, for inputs fulfilling the property that for all $v\in\{1,\ldots,\Delta\}$ and some $\sigma \leq 1/2$, the number of nodes observing $v$ does not change by a factor larger than $\sigma$ between consecutive time steps,
we provide an algorithm that uses an expected amount of $\bigO(|D_1| (\max(\delta, \sigma)T+1) \frac{1}{\varepsilon^2} \log \frac{|D_1|}{\delta})$ messages.
\todo{stimmt die schranke da jetzt so? Hab das $|D_1|$ innen log gezogen}
In \cref{sec:countDistinct}, we provide an algorithm using $\Theta(T \cdot \frac{1}{\varepsilon^2} \log \frac{1}{\delta})$ messages on expectation for the Count Distinct Monitoring Problem for $T$ time steps.
For instances which exhibit a certain similarity an algorithm is presented which monitors the problem using 
$\Theta \left( \left( 1+T\cdot \max\{2\sigma, \delta \} \right) \frac{\log (n) \cdot R^*}{|D_t| \cdot \varepsilon^2} \log \frac{1}{\delta} \right)$
%$\Theta \left( \left(T\cdot \max\{\sigma, \delta \} +1\right) \frac{1}{\varepsilon^2} \log \frac{1}{\delta} \right)$ 
messages on expectation.

\subsection{Related Work}
\label{DR:section:relatedWork}
The basis of the model considered in this paper is the \emph{continuous monitoring model} as introduced by Cormode, Muthukrishnan and Yi in \cite{cormodeModel}.
In this model, there is a set of $n$ distributed nodes each observing a stream given by a multiset of items in each time step.
The nodes can communicate with a central server, which in turn has the task to continuously, at any time $t$, compute a function $f$ defined over all data observed across all streams up to time $t$.
The goal is to design protocols aiming at the minimisation of the number of bits communicated between the nodes and the server.
In \cite{cormodeModel}, the monitoring of several functions is studied in their (approximate) threshold variants, in which the server has to output $1$ if $f\geq \tau$ and $0$ if $f \leq (1-\varepsilon)\tau$, for given $\tau$ and $\varepsilon$.
Precisely, algorithms for the frequency moments $F_p= \sum_i m_i^p$ where $m_i$ denotes the frequency of item $i$ for $p=0,1,2$ are given.
$F_1$ represents the simple sum of all items received so far and $F_0$ the number of distinct items received so far.
Since the introduction of the model, monitoring of several functions has been studied such as the monitoring of frequencies and ranks by Huang, Yi and Zhang in \cite{huang}.
The frequency of an item $i$ is defined to be the number of occurrences of $i$ across all streams up to the current time.
The rank of an item $i$ is the number of items smaller than $i$ observed in the streams.
Frequency moments for any $p>2$ are considered by Woodruff and Zhang in \cite{woodruff}.
A variant of the Count Distinct Monitoring Problem is considered by Gibbons and Tirthapura in \cite{gibbons}.
The authors study a model in which each of two nodes receives a stream of items and at the end of the streams a server is asked to compute $F_0$ based on both streams.
A main technical ingredient is the use of so called public coins, which, once initialized at the nodes, provide a way to let different nodes observe identical outcomes of random experiments without further communication.
We will adopt this technique in \cref{sec:countDistinct}.
Note that the previously mentioned problems are all defined over the items \emph{received so far}, which is in contrast to the definition of monitoring problems which we are going to consider and which are all defined only based on the \emph{current time step}.
This fact has the implication that in our problems the monitored functions are no longer monotone, which makes its monitoring more complicated.

Concerning monitoring problems in which the function tracked by the server only depends on the current time step, there is also some previous work to mention.
In \cite{lam}, Lam, Liu and Ting study a setting in which the server needs to know, at any time, the order type of the values currently observed.
That is, the server needs to know which node observes the largest value, second largerst value and so on at time $t$.
In \cite{yi}, Yi and Zhang consider a system only consisting of one node connected to the server.
The node continuously observes a $d$-dimensional vector of integers from $\{1, \ldots, \Delta\}$.
The goal is to keep the server informed about this vector up to some additive error per component.
In \cite{davis}, Davis, Edmonds and Impagliazzo consider the following resource allocation problem:
%We describe it in our terminology): 
$n$ nodes observe  streams of required shares of 
%  that describe an amount of a shared
a given resource.
The server has to assign, to each node, in each time step, a share of the  resource that is as least as large as the required share.
The objective is then given by the minimization of communication necessary for adapting the assignment of the resource over time.

\section{The Domain Monitoring Problem}
\label{se:domain}
We start by presenting an algorithm to solve the Domain Monitoring Problem for a single time step. 
We analyse the communication cost using standard worst-case analysis and show tight bounds.
By applying the algorithm for each time step, we then obtain tight bounds for monitoring the domain for any $T$ time steps.
The basic idea of the protocol as given in \cref{alg:p1} is quite simple:
Applied at a time $t$ with a value $v \in \{1,\ldots, \Delta\}$, the server gets informed whether $v \in D_t$ holds or not.
To do so, each node $i$ with $v_i^t=v$ essentially draws a value from a geometric distribution and then those nodes having drawn the largest such value send broadcast messages.
By this, one can show that on expectation only a constant number of messages is sent.

Furthermore, if applied with $v = nil$, the server can decide whether $v' \in D_t$ for all $v' \in \{1,\ldots,\Delta\}$ at once with $\Theta(|D_t|)$ messages on expectation.
To this end, for \emph{each} $v' \in \{1,\ldots,\Delta\}$ independently, the nodes $i$ with $v_i^t=v'$ drawing the largest value from the geometric distribution send broadcast messages.
In the presentation of \cref{alg:p1}, we assume that $v^t_i = v$ is always true if $v = nil$.
Also, in order to apply it to a subset of nodes, we assume that each node maintains a value $status_i \in \{0,1\}$ and only nodes $i$ take part in the protocol for which $status_i = status$ holds.

\begin{algorithm}[ht]
\begin{enumerate}[noitemsep]
    \item Each node $i$ for which $status_i = status$ and $\left(v \neq nil\Rightarrow v^t_i = v\right)$ hold, draws a value $\hat h_i$ from a geometric distribution with success probability $p \coloneqq 1/2$.
    \item Let $h_i = \min\{\log n, \hat h_i \}$.
    \item Node $i$ broadcasts its value in round $\log n -h_i$ unless a node $i'$ with $v^t_i = v^t_{i'}$ has broadcasted before.
\end{enumerate}
\caption{\protone($v, status$) \hfill\textit{[for fixed time $t$]}}
\label{alg:p1}
\vspace{-3mm}
\end{algorithm}

We have the following lemma, which bounds the expected communication cost of \cref{alg:p1} and has already appeared in a similar way in \cite{ipdps2} (Lemma~III.1).

\begin{lemma}
 \label{lemma:protone_message_complexity}
Applied for a fixed time $t$, \protone($v, 1$) uses $\Theta(1)$ messages on expectation if $v \neq nil$ and $\Theta(|D_t|)$ otherwise.
\end{lemma}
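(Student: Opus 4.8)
The plan is to treat the two cases separately and to reduce the case $v=nil$ to the case $v\neq nil$ by linearity of expectation. The central observation, common to both, is that among the participating nodes sharing a common value the number of broadcasts equals the number of nodes attaining the maximum of the capped variables $h_i$: the node(s) with the largest $h_i$ broadcast in the earliest round $\log n - h_i$, every node with a strictly smaller $h_i$ is suppressed because a same-valued node has already broadcast, and nodes tied at the maximum share a single round and therefore all broadcast. Hence for one value the message count is exactly $|\{i : h_i = \max_j h_j\}|$, where the maximum ranges over the $m := |N_t^v|$ nodes observing that value.

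First I would settle $v\neq nil$. Ignoring the cap, symmetry and independence give the expected number of maximisers as
\[
\E[X] = m \sum_{k \geq 1} 2^{-k}\,(1 - 2^{-k})^{m-1},
\]
since a fixed node is a maximiser at level $k$ exactly when it draws $k$ (probability $2^{-k}$) and the other $m-1$ nodes all draw at most $k$ (probability $(1-2^{-k})^{m-1}$), with ties counted. To prove this is $O(1)$ uniformly in $m$, I would split the sum at $k^\ast = \lceil \log_2 m\rceil$. For $k \le k^\ast$ I bound $(1-2^{-k})^{m-1} \le e^{-(m-1)2^{-k}}$ and set $a_k = m 2^{-k}\ge 1$; the terms then behave like $a_k e^{-a_k}$ along the geometric sequence $a_k\in\{\dots,4,2,1\}$, whose sum is a convergent constant. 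For $k > k^\ast$ I use $(1-2^{-k})^{m-1}\le 1$ and sum the geometric tail $m\sum_{k>k^\ast}2^{-k}\le 2$. Together these yield $\E[X]=O(1)$.

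Next I would reinstate the cap $h_i=\min\{\log n,\hat h_i\}$. Truncation can only change the count by piling several large draws onto the shared value $\log n$, and the number of nodes with $\hat h_i\ge \log n$ has expectation $m\cdot 2^{-(\log n-1)} = 2m/n \le 2$ because $m\le n$; this adds at most a constant in expectation. I would formalise this by the deterministic inequality $X_{\text{capped}}\le X_{\text{uncapped}} + |\{i:\hat h_i\ge \log n\}|$ and then take expectations. The lower bound $\Omega(1)$ is immediate whenever $v\in D_t$, since the maximiser always broadcasts, so the case $v\neq nil$ gives $\Theta(1)$.

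Finally, for $v=nil$ the suppression rule compares only nodes sharing a value, so the protocol runs the above instance independently for every $v'\in\{1,\dots,\Delta\}$. By linearity of expectation the total is $\sum_{v'\in D_t}\E[X_{v'}]$, each summand is $\Theta(1)$ by the first case (values $v'\notin D_t$ contribute nothing), giving the upper bound $O(|D_t|)$, while the per-value lower bound of one message gives $\Omega(|D_t|)$, hence $\Theta(|D_t|)$. Here I would also note that the capped contributions sum correctly, since $\sum_{v'} 2m_{v'}/n = 2$ as $\sum_{v'} m_{v'} = n$. I expect the main obstacle to be the $m$-independent constant bound on $\E[X]$ in the second paragraph: extracting a clean constant from $m\sum_k 2^{-k}(1-2^{-k})^{m-1}$ needs the right split point and the two matching estimates, whereas the cap handling and the final summation are routine.
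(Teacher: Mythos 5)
Your proof is correct and follows essentially the same route as the paper's: both reduce the message count to the expected number of maximisers of the (capped) geometric draws and then bound $m\sum_{r}2^{-r}(1-2^{-r})^{m-1}$ by a universal constant, the paper via an integral comparison and you via a split of the sum at $\lceil\log_2 m\rceil$, with the $v=nil$ case handled by per-value independence and linearity in both. Your explicit accounting for the truncation at $\log n$ and for the lower bounds is slightly more careful than the paper's write-up, but it does not constitute a different approach.
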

\begin{proof}
	First consider the case where $v \neq nil$.
	Regarding the expected communication of \protone($v, 1$) we introduce some notation.
	Let $ X_{i} $ be a $ \set{0,1} $-random variable indicating whether the node $i \in \ntv $ sends a message to the server, % and $ h_{i} $ the height of sensor $ i $.
	and $X\coloneqq \sum X_i$. %Furthermore let $ X $ be the sum of all $ X_{i} $'s.
	According to the algorithm a sensor $ i $ sends a message if and only if its height $ h_{i} $ matches the round specified for that height and no other sensor $ i' $ has sent its value beforehand.
	We obtain
	\begin{align*}
	\Prob{X_{i} = 1} 
	&= \Prob{\exists r \in \set{1,\dots,\log n}: h_{i} = r \land \forall i' \in \ntv \setminus\set{i} : h_{i'} \leq r  }\\
	&\leq \sum_{r=1}^{\log n} \frac{1}{2^r}\left(1-\frac{1}{2^r}\right)^{n^v-1} .
	\end{align*}
	We know that $ \textnormal{E}[X_{i}] = \Prob{X_{i}=1} $ and thus
	\begin{align*}
	\textnormal{E}[X] 
	\leq n^v \cdot \sum_{r=1}^{\log n} \frac{1}{2^r}\left(1-\frac{1}{2^r}\right)^{n^v-1}.
	\end{align*} 
	Observing that $f(r) = n^v \cdot \frac{1}{2^{r}}\paren{1-\frac{1}{2^{r}}}^{n^v-1}$ has only one extreme point and $f(r)\leq 2$ for all $r\in[0,\log(n)]$, we use the integral test for convergence to obtain
	\begin{align*}
	\textnormal{E}[X]
	&\leq n^v \cdot\sum_{r=1}^{\log n}\frac{1}{2^{r}}\paren{1-\frac{1}{2^{r}}}^{n^v-1}
	\leq n^v \int_{0}^{\log n}\frac{1}{2^{r}}\paren{1-\frac{1}{2^{r}}}^{n^v-1}\textnormal{dr} + 2\\
	&\leq \left[\frac{1}{\ln\paren{2}}\paren{1-\frac{1}{2^{r}}}^{n^v}\right]_{0}^{\log n}+2
	\leq \frac{1}{\ln\paren{2}}+2
	<4.
	\end{align*}
	
	For the case $v= nil$ we can apply the same argumentation independently for each value $v \in D_t$.
	This concludes the proof of the lemma.
	\hfill %\qed
\end{proof}

In order to solve the domain monitoring problem for $T$ time steps, the server proceeds as follows:
In each step $t$ the server calls \protone($nil, 1)$ to identify all values belonging to $D_t$ as well as a valid sequence $R_t$.
By the previous lemma we then have an overall communication cost of $\Theta(|D_t|)$ for each time step $t$.
For monitoring $T$ time steps, the cost is $\Theta(\sum_{t \in T} |D_t|)$.
This is asymptotically optimal in the worst-case since on instances where $D_t \cap D_{t+1} = \emptyset$ for all $t$, any algorithm has cost $\Omega(\sum_{t \in T} |D_t|)$.

\begin{theorem}
	Using \protone($v, 1$), the Domain Monitoring Problem for $T$ time steps can be solved using  $\Theta(\sum_{t \in T}|D_t|)$ messages on expectation.
\end{theorem}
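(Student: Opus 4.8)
The $\Theta$ in the statement refers to the expected cost of the proposed algorithm, and follows directly from the per-step bound of \cref{lemma:protone_message_complexity}; on top of this I would justify the worst-case optimality asserted in the surrounding discussion, namely that no algorithm can beat $\Omega(\sum_{t\in T}|D_t|)$ on instances with $D_t\cap D_{t+1}=\emptyset$.

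For the algorithm, the server invokes \protone($nil, 1$) in every time step $t$, and I would first argue correctness. Run with $v = nil$, the protocol lets the nodes compete independently for each observed value, so for every $v' \in D_t$ at least one node observing $v'$ broadcasts in the round determined by its drawn height, and the first such broadcast carries both the value $v'$ and the ID of the broadcasting node. Hence the server learns all of $D_t$ and may set $j_{v'}$ to the first broadcaster for each $v' \in D_t$ and $j_{v'} = nil$ otherwise, which is exactly a valid pair $(D_t, R_t)$. By \cref{lemma:protone_message_complexity} each invocation costs $\Theta(|D_t|)$ messages in expectation, so summing over the $T$ steps and using linearity of expectation gives $\Theta(\sum_{t \in T}|D_t|)$, as claimed.

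For the optimality claim I would exhibit a hard family and apply Yao's principle. Consider instances in which $D_1, \ldots, D_T$ are pairwise disjoint (a special case of $D_t \cap D_{t+1} = \emptyset$), each $D_t$ drawn uniformly from the $|D_t|$-subsets of a fresh pool of values, with $\Delta$ taken large enough relative to $n$ and $\sum_t|D_t|$ that each step contributes $\Omega(|D_t|\log\Delta)$ bits of entropy while the message size is still $\BigO{\log\Delta}$. Since a correct server must output $D_t$ in each step, the transcript of all messages up to time $t$ must determine $D_t$, so the full transcript carries at least $H(D_1,\ldots,D_T) = \Omega(\sum_t |D_t|\log\Delta)$ bits. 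Dividing by the per-message capacity $\BigO{\log\Delta}$ yields $\Omega(\sum_t |D_t|)$ messages in expectation for any deterministic algorithm on this distribution, and by Yao's principle the same bound holds for every randomised algorithm on the worst-case instance.

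The step I expect to be the real obstacle is making this lower bound quantitatively tight: the information content is measured in bits whereas the claim counts messages, so I must keep the $\Omega(\log\Delta)$ bits carried per domain value from being swamped by the $\log n + \log\log\frac{1}{\delta}$ terms of the message size. This is precisely why I choose the instance with $\Delta$ sufficiently large relative to $n$, which is legitimate as the lower bound is quantified over worst-case instances. The remaining, easier point is to confirm that the independent per-value competition in \protone($nil,1$) never misses a value of $D_t$, so that the summed expectation from \cref{lemma:protone_message_complexity} genuinely bounds the cost of a correct solution.
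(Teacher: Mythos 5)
Your proof of the theorem itself is exactly the paper's: invoke \protone($nil,1$) in every time step, apply \cref{lemma:protone_message_complexity} to get $\Theta(|D_t|)$ expected messages per step, and sum by linearity of expectation. The additional Yao-principle/entropy lower bound is extra material the paper does not prove (it only asserts the $\Omega(\sum_{t\in T}|D_t|)$ worst-case optimality in passing, and the theorem's $\Theta$ refers only to the algorithm's own cost, already given two-sidedly by the lemma), so while your sketch is a reasonable way to substantiate that remark, it is not needed for the statement as written.
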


%\newpage
\subsection*{A Parameterised Analysis}
Despite the optimality of the result, the strategy of computing a new solution from scratch in each time step seems unwise and the analysis does not seem to capture the essence of the problem properly.
It often might be the case that there are some similarities between values observed in consecutive time steps and particularly, that $D_t \cap D_{t+1} \neq \emptyset$.
In this case, there might be the chance to keep a representative for several consecutive time steps, which should be exploited. % by an algorithm.
Due to these observations we next define a parameter describing this behavior and provide a parameterised analysis.
To this end, we consider the number of component-wise differences in the sequences of nodes $R_{t-1}$ and $R_t$ and call this difference the \emph{number of changes of representatives} in time step $t$.
Let $R^*$ denote the minimum possible number of changes of representatives (over all considered time steps $T$).
The formal description of our algorithm is given in \cref{alg:domainMonitoring}.
Roughly speaking, the algorithm defines, for each value $v$, phases, where a phase is defined as a maximal time interval during which there exists one node observing value $v$ throughout the entire interval.
Whenever a node being a representative for $v$ changes its observation, it informs the server so that a new representative can be chosen (from those observing $v$ throughout the entire phase, which is indicated by $status_i=1$).
If no new representative is found this way, the server tries to find a new representative among those observing $v$ and for which $status_i=0$ and ends the current phase.
Additionally, if a node observes a value $v$ at time $t$ for which $v \notin D_t$, a new representative is determined among these nodes.
Note that this requires each node to store $D_t$ at any time $t$ and hence a storage of $\bigO(\Delta)$.

\begin{algorithm}[ht]
\small
\textbf{(Node $i$)}
\vspace{-1mm}
\begin{enumerate}[noitemsep]
    \item Define $status_i \coloneqq 1$.
    \item If at some time $t$, $v^t_i \neq v^{t-1}_{i}$, then
    \begin{enumerate}[noitemsep]
        \item If $v^t_i \notin D_{t-1}$, set $status_i = 0$ and apply \protone($v^t_i, 0$).
        \item If $v^t_i \in D_{t-1}$, set $status_i = 0$. Additionally inform server in case $i \in R_{t-1}$.
        \item If server starts a new phase for $v = v^t_i$, set $status_i = 1$.
    \end{enumerate}
\end{enumerate}

\textbf{(Server)}

\vspace{2mm}
\textit{[Initialisation]}

Call \protone($nil, 1)$ to define $D_0$ and for each $v \in D_0$ choose a representative uniformly at random from all nodes which have sent $v$.

\vspace{2mm}
\textit{[Maintaining $D_t$ and $R_t$ at time $t$]}

Start with $D_t = D_{t-1}$ and $R_t = R_{t-1}$ and apply the following rules:\vspace{-2mm}
\begin{enumerate}[noitemsep]
	\item[$\bullet$] \textit{[Current Phase, (try to) find new representative]}
    \item[] If informed by representative of a value $v \in D_{t-1}$,
    \begin{enumerate}[label=\arabic*)]
     \item Call \protone($v, 1)$.
     \item If node(s) respond(s), choose new representative among the responding sensors uniformly at random. \label{step:changeRep}
     \item Else call \protone($v, 0)$. End current phase for $v$ and, if there is no response, delete $v$ from $D_t$ and the respective representative from $R_t$. \label[step]{step:removed}
    \end{enumerate}
    \item[$\bullet$] \textit{[If \protone($v,0$) leads to received message(s), start new phase]}
    \item[]
     Start a new phase for value $v$ if message from an application of \protone($v, 0)$ (by \cref{step:removed} initialised by the server or initialised in Step 2.1.\ by a node) is received.
    Add or replace respective representative in $R_t$ by choosing a node uniformly at random from those responding to \protone($v, 0)$. \label{step:restart}
\end{enumerate}
\caption{\DoMon}
\label{alg:domainMonitoring}
\vspace{-3mm}
\end{algorithm}

\newpage
\begin{theorem}
\textsc{DomainMonitoring} as described in \cref{alg:domainMonitoring} solves the Domain Monitoring Problem using $\bigO(\log n \cdot R^*)$ messages on expectation, where $R^*$ denotes the minimum possible number of changes of representatives.
\end{theorem}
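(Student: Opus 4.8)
The plan is to charge all communication to the phases maintained by the algorithm and to prove two things separately: that the total number of phases is at most $R^*$, and that each phase incurs only $\BigO{\log n}$ messages in expectation. Multiplying and summing then gives the claim. Throughout, I call a node a \emph{survivor} of a phase of value $v$ if it has observed $v$ continuously since the phase began (equivalently $status_i = 1$); by construction a phase of $v$ is a maximal time interval admitting at least one survivor, and the server always keeps the representative equal to one of the current survivors.

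First I would settle correctness and classify the message types. Every message is sent either (a) during the initial \protone$(nil,1)$, (b) when a representative leaves its value and the server runs \protone$(v,1)$ (and possibly \protone$(v,0)$) to re-elect or to close/restart a phase, or (c) when a node observes a value outside the current domain and runs \protone$(v,0)$. By \cref{lemma:protone_message_complexity}, (a) costs $\BigO{|D_0|}$ and each single invocation in (b) or (c) costs $\BigO{1}$ in expectation. Each phase is started once (by a type-(c) invocation or a server-initiated restart) and closed once, so the type-(c) and the phase-boundary type-(b) invocations number $\BigO{1}$ per phase; it therefore remains to bound the number of \emph{within-phase} re-elections, i.e.\ the events in which a departing representative is replaced by another survivor of the same phase.

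Next I would prove $\sum_v P_v \leq R^*$, where $P_v$ denotes the number of phases of $v$. Fix $v$ and a maximal interval during which $v \in D_t$. Any valid representative sequence partitions this interval into runs, each run served by a single node that observes $v$ throughout it, so the minimum number of runs equals the minimum number of intervals needed to cover the interval by ``common-node'' intervals. Since admitting a common node is closed under taking sub-intervals, the greedy rule of always extending the current interval maximally is optimal for this covering, and this is exactly how the algorithm forms its phases (a phase ends precisely when no survivor of its start remains). Hence the algorithm attains the minimum number of runs; as entering the domain forces one change from $nil$ and each additional run forces a further change, $R^*_v \geq P_v$, and summing over $v$ yields $\sum_v P_v \leq R^*$.

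Finally, the heart of the argument is that a phase with $k$ initial survivors triggers only $\BigO{\log n}$ re-elections in expectation. The survivors leave (change their observation) in an order fixed by the input, and whenever the current representative leaves while survivors remain, the server runs \protone$(v,1)$ and selects the new representative uniformly among the responders. Because the geometric draws in \protone are i.i.d.\ over the current survivors, by symmetry each current survivor is equally likely to be selected, so the new representative is uniform over the surviving set. Letting $E[k]$ be the expected number of re-elections with $k$ survivors and conditioning on the leaving-position of the chosen representative gives $E[k] = \tfrac{1}{k}\bigl((k-1)+\sum_{j=1}^{k-1} E[j]\bigr)$, which telescopes to $E[k] = E[k-1] + \tfrac{1}{k}$ and hence $E[k] = H_k - 1 = \BigO{\log n}$ since $k \leq n$. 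As each re-election costs $\BigO{1}$ messages, a phase costs $\BigO{\log n}$ in expectation, and the total expected cost is $\BigO{|D_0|} + \BigO{\log n}\sum_v P_v = \BigO{\log n \cdot R^*}$ using $|D_0| \leq \sum_v P_v \leq R^*$. The main obstacle is this last stage: justifying that the re-election distribution is genuinely uniform over the current survivors — which is what makes the clean records-style recurrence applicable — together with verifying that the algorithm's phase transitions coincide with the greedy-maximal common-node intervals underlying the lower bound.
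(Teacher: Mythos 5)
Your proposal is correct and follows the paper's skeleton---charge all communication to phases, bound the number of phases by $R^*$, and bound the expected number of re-elections per phase by $\BigO{\log n}$---but it proves both key steps by genuinely different arguments. For the lower bound, the paper argues directly: a phase ends only in \cref{step:removed}, i.e.\ when \protone$(v,1)$ receives no response, so no single node observed $v$ throughout that phase; hence any valid representative sequence must change at least once inside each completed phase, and these forced changes lie in disjoint time intervals. Your greedy-covering argument establishes something slightly stronger (the algorithm's phase count equals the \emph{minimum} possible number of runs, since maximal extension is optimal for covering under a sub-interval-closed property), and it handles the boundary case of the final, still-open phase of a value more cleanly than the paper's charging does. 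For the per-phase bound, the paper sorts the $k$ initial survivors by departure time, partitions the time steps into groups in which the surviving set has been halved $i$ times, and shows each of the $\log k$ groups contributes $\BigO{1}$ expected representative changes via a geometric series; you instead derive the records-style recurrence $E[k] = \frac{1}{k}\left((k-1)+\sum_{j=1}^{k-1}E[j]\right)$, which telescopes to exactly $H_k - 1$. Both analyses hinge on the same fact, which you justify explicitly and the paper uses only implicitly in its \cref{step:changeRep}: the i.i.d.\ geometric draws make the set of responders exchangeable, so the newly elected representative is uniform over the current survivors. Your route buys a sharper estimate (the harmonic number, rather than $\BigO{1}$ per halving group); the one detail you leave implicit is tie-breaking when several survivors depart simultaneously, but since $E[k]=H_k-1$ is monotone in $k$, your recurrence then still holds as an upper bound, just as the paper's halving argument tolerates ties.
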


\begin{proof}
We consider each value $v \in \bigcup_t D_t$ separately.
Let $N_{t_1, t_2} \coloneqq \{i \mid v_i^t = v \, \forall t_1 \leq t \leq t_2\}$ denote the set of nodes that observe the value $v$ at each point in time $t$ with $t_1 \leq t\leq t_2$. 
Consider a fixed phase for $v$ and let $t_1$ and $t_2$ be the points in time where the phase starts and ends, respectively.
A phase only ends in \cref{step:removed}, hence there was no response from \protone$(v,1)$, which implies $N^v_{t_1,t_2} = \emptyset$.
Thus, to each phase for $v$ we can associate a cost of at least one to $R^*$ and this holds for each $v \in \bigcup_t D_t$.
Therefore, $R^*$ is at least the overall number of phases of all values.

Next we analyze the expected cost of \cref{alg:domainMonitoring} during the considered phase for $v$.
Let w.l.o.g.\ $N_{t_1} \coloneqq N_{t_1,t_1}=  \{1, 2, \ldots, k\}$.
With respect to the fixed phase, only nodes in $N_{t_1}$ can communicate and the communication is bounded by the number of changes of the representative for $v$ during the phase.
Let $t'_i$ be the first time after $t_1$ at which node $i$ does not observe $v$.
Let the nodes be sorted such that $i<j$ implies $t'_i \geq t'_j$.
Let $a_1, \ldots, a_m$ be the nodes \cref{alg:domainMonitoring} chooses as representatives in the considered phase.
We want to show that $\mathbb{E}[m] = \bigO( \log k)$.
To this end, partition the set of time steps $t'_i$ into groups $G_i$.
Intuitively, $G_i$ represents the time steps in which  the nodes continuously observe value $v$ since time $t_1$ and the size of the initial set of nodes that observed $v$ is halved $i$ times.
Formally, $G_i$ contains all time steps $t_{\ell_{i-1}+1},\ldots,t_{\ell_i}$ (where $\ell_{-1}\coloneqq 0$ for convenience) such that $\ell_i$ is the largest integer fulfilling $|N_{t_1, t'_{\ell_i}}| \in (k/2^{i+1}, k/2^i]$.

Let $S_i$ be the number of changes of representatives in time steps belonging to $G_i$.
We have $\mathbb{E}[m] = \sum_{i=0}^{\log k} \mathbb{E}[S_i]$.
Consider a fixed $S_i$. 
Let $\mathcal{E}_j$ be the event that the $j$-th representative chosen in time steps belonging to $G_i$ is the first one with an index in $\left\{1, \ldots, \lfloor\frac{k}{2^{i+1}}\rfloor\right\}$.
Observe that as soon as this happens, the respective representative will be the last one chosen in a time step belonging to group $G_i$. 

Now, since the algorithm chooses a new representative uniformly at random from the index set $\left\{1, \ldots, \lfloor\frac{k}{2^{i}}\rfloor\right\}$,
the probability that it chooses a representative from $\left\{1, \ldots, \lfloor\frac{k}{2^{i+1}}\rfloor\right\}$ is at least $1/2$ except for the first representative of $v$, where it might be slightly smaller due to rounding errors.
$\mathcal{E}_j$ occurs only if the first $j-1$ representatives were each \emph{not} chosen from this set, i.e. $\Prob{\mathcal{E}_j}\leq \left(\frac{1}{2}\right)^{j-2}$.
Hence,
$\E[S_i] = \sum_{j} \E[S_i | \mathcal{E}_j] \cdot \Pr[\mathcal{E}_j] 
\leq \sum_j j \cdot (\frac{1}{2} )^{j-2} 
= \sum_j \frac{j}{2^{j-2}} = \bigO(1)$.
%\qed
\end{proof}

\section{The Frequency Monitoring Problem}
\label{sec:frequencies}
In this section we design and analyse an algorithm for the Frequency Monitoring Problem, i.e.\ to output (an approximation) of the number of nodes currently observing value $v$.
We start by considering a single time step and present an algorithm which solves the subproblem to output the number of nodes that observe $v$ within a constant multiplicative error bound. 
Afterwards, and based on this subproblem, a simple sampling algorithm is presented which solves the Frequency Monitoring Problem for a single time step up to a given (multiplicative) error bound and with demanded error probability. 

While in the previous section we used the algorithm \protone\ with the goal to obtain a representative for a measured value, in this section we will use the same algorithm to estimate the number of nodes that measure a certain value $v$.
Observe that the expected maximal height of the geometric experiment increases with a growing number of nodes observing $v$.
We exploit this fact and use it to estimate the number of nodes with value $v$, while still expecting constant communication cost only.
For a given a time step $t$ and a value $v \in D_t$, we define an algorithm \CoFac\ as follows:
We apply \protone($v,1$) with $status_i=1$ for all nodes $i$. 
If the server receives the first response in communication round $r \leq \log n$, the algorithm outputs $\tilde n_{\text{const}}^v=2^r$ as the estimation for $\ntvsize$.

We show that we compute a constant factor approximation with constant probability. Then we amplify this probability using multiple executions of the algorithm and taking the median (of the executions) as a final result.
\begin{lemma}
\label{lemma:constant_factor_approximation}
The algorithm \CoFacS estimates the number $\ntvsize$ of nodes observing the value $v$ at time $t$ up to a factor of $8$, i.e.\ $\tilde n^{v}_{const} \in [\ntvsize/8, \ntvsize \cdot 8]$ with constant probability.
\end{lemma}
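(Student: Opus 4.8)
The plan is to reduce the claim to a standard concentration statement about the maximum of independent geometric variables. Write $k \coloneqq \ntvsize$ for the number of nodes observing $v$, and let $H \coloneqq \max_{i \in \ntv} h_i$ be the largest (truncated) height drawn during the execution of \protoneS. Since a node broadcasts earlier the larger its height is, the first message the server receives stems from a node attaining $H$, so the server effectively learns $H$ and the returned value is $\tilde n^{v}_{const} = 2^{H}$. Recalling from the proof of \cref{lemma:protone_message_complexity} that $\Prob{\hat h_i \geq j} = 2^{-(j-1)}$, the event to control is $2^{H} \in [\ntvsize/8, \ntvsize \cdot 8]$, which is exactly $\abs{H - \log k} \leq 3$. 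Hence it suffices to show that $H$ falls within $\pm 3$ of $\log k$ with constant probability, and I would bound the two tails separately.

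For the upper tail I would use a union bound. Since truncation can only decrease heights, $\Prob{2^{H} > 8k} = \Prob{H > \log k + 3} \leq \Prob{\exists i : \hat h_i \geq \log k + 3} \leq k \cdot 2^{-(\log k + 2)} = \tfrac14$. For the lower tail I would use independence. First one checks that truncation is harmless: because $k = \ntvsize \leq n$, whenever the untruncated maximum exceeds $\log n$ the truncated value equals $\log n > \log k - 3$, so the event $\set{H \leq \log k - 3}$ coincides with $\set{\max_i \hat h_i \leq \log k - 3}$. By independence, and using $\Prob{\hat h_i \geq \log k - 2} = 8/k$, this probability equals $\paren{1 - 8/k}^{k} \leq e^{-8}$. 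Combining both tails yields $\Prob{\tilde n^{v}_{const} \notin [\ntvsize/8, \ntvsize \cdot 8]} \leq \tfrac14 + e^{-8} < \tfrac13$, so the output is an $8$-approximation with probability at least $2/3$.

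The probabilistic core is standard (the expected maximum of $k$ independent geometric$(1/2)$ variables is $\approx \log k$), so the genuine work is bookkeeping, and this is where I expect the only real friction. I would be careful about three points: (i) discreteness, since $\log k$ need not be an integer and the thresholds $\log k \pm 3$ should be read with floors and ceilings — but the factor-$8$ window is wide enough to absorb this; (ii) the truncation at $\log n$, handled above via $k \leq n$; and (iii) small $k$ (say $k \leq 16$), where $\log k - 3 \leq 1 \leq H$ makes the lower tail vacuous and the claim holds trivially. The one load-bearing design choice is the width of the window: taking exactly a factor of $8$, i.e.\ $\pm 3$ in the exponent, is what makes the union bound give $\tfrac14$ and the lower tail $e^{-8}$ simultaneously, leaving a constant gap below $1$; a substantially smaller factor would not obviously deliver constant success probability from these elementary estimates, and amplification (median of several runs) is then applied afterwards to boost $2/3$ to $1-\delta$.
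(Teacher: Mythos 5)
Your proof is correct and follows essentially the same route as the paper's: a union bound over the $k$ nodes for the upper tail (yielding $\tfrac14$) and independence for the lower tail (yielding $(1-8/k)^{k} \leq e^{-8}$), with the same $\pm 3$ window in the exponent and the same final combination of the two tail bounds. Your explicit treatment of the truncation at $\log n$ and of the identification of the output with $2^{H}$ for $H$ the maximum height is slightly more careful than the paper's write-up, but the argument is the same.
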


\begin{proof}
Let $n^v$ be the number of nodes currently observing value $v$, i.e.\ $n^v \coloneqq \ntvsize$.
Recall that the probability for a single node to draw height $h$ is $\Pr[h_i = h]=\frac{1}{2^h}$, if $h < \log n$, and $\Pr[h_i = h] = \frac{2}{2^{h}}$, if $h = \log n$.
Hence, $\Pr[h_i \geq h]=\frac{1}{2^{h-1}}$ for all $h\in\{1,\ldots,\log n\}$.

We estimate the probability of the algorithm to fail, by analysing the cases that $\tilde{n}^v_{const}$ is larger than $\log n^v+3$ or smaller than $\log n^v-3$.
We start with the first case and by applying a union bound we obtain:
\begin{align*}
\Pr[\exists i : h_i > \log n^v + 3] 
 &\leq \Pr[\exists i : h_i \geq \lceil\log n^v\rceil + 3] \\
 &= n^v \cdot \left(\frac{1}{2} \right)^{\lceil\log n^v\rceil + 2}
\leq\frac{1}{4}.
\end{align*}

For the latter case we bound the probability that each node has drawn a height strictly smaller than $\log n^v - 3$ by
\begin{align*}
\Pr[\forall i :h_i < \log n^v - 3] 
 & \leq \prod_{i} \Pr[h_i < \lceil\log n^v\rceil - 3]\\
 &= \left(1- \frac{1}{2^{\lceil\log n^v\rceil - 4}} \right)^{n^v}
\leq \left(1- \frac{8}{n^v} \right)^{n^v} 
\leq \frac{1}{e^8}.
\end{align*}

Thus, the probability that we compute an 8-approximation is bounded by
\begin{align*}
  \Pr \left[\frac{n^v}{8} \leq 2^{h_i} \leq 8 n^v \right]
  &=1- \bigl( \Pr[\exists i: h_i >\log n^v + 3]  +  \Pr[ \forall i:h_i < \log n^v - 3] \bigr)\\
  &\geq 1- \left(\frac{1}{4} + \frac{1}{e^8} \right)
  > 0.7
\end{align*}
%\vspace{-5mm}
\end{proof}

We apply an amplification technique to boost the success probability to arbitrary $1-\delta'$ using $\Theta(\log \frac{1}{\delta'})$ parallel executions of the \CoFacS algorithm and choose the median of the intermediate results as the final output. 

\begin{corollary}
 \label{corollary:constant_factor_approximation_delta}
 Applying $\Theta\left( \log \frac{1}{\delta'} \right)$ independent, parallel instances of \CoFacS, we obtain a constant factor approximation of $\ntvsize$ with success probability at least $1-\delta'$ using $\Theta \left( \log \frac{1}{\delta'} \right)$ messages on expectation.
\end{corollary}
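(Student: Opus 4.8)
The plan is to combine the single-instance guarantee of \cref{lemma:constant_factor_approximation} with a standard median-of-estimates amplification. Fix the time step $t$ and value $v \in D_t$, let $k = \Theta(\log\frac{1}{\delta'})$ be the number of independent parallel runs of \CoFacS, and let $\tilde n^v_{const,1}, \ldots, \tilde n^v_{const,k}$ denote their estimates of $\ntvsize$. By \cref{lemma:constant_factor_approximation}, each run independently lands in the good interval $[\ntvsize/8,\, 8\ntvsize]$ with probability $p > 0.7 > 1/2$. The final output is the median of these $k$ estimates.

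First I would argue that the median is an $8$-approximation whenever a strict majority of the runs are good. Indeed, if the median were smaller than $\ntvsize/8$, then at least $\lceil k/2\rceil$ of the estimates would lie below $\ntvsize/8$, contradicting that more than $k/2$ of them are at least $\ntvsize/8$; the symmetric argument rules out a median exceeding $8\ntvsize$. Hence it suffices to bound the probability that at most half of the $k$ runs are good.

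Next I would apply a Chernoff bound. Let $Y = \sum_{j=1}^{k} Y_j$, where $Y_j$ indicates that run $j$ is good, so that $\E[Y] \geq pk$ with $p > 1/2$. The failure event is $Y \leq k/2 < \E[Y]$, and since the $Y_j$ are independent Bernoulli variables, a standard multiplicative Chernoff bound yields $\Prob{Y \leq k/2} \leq e^{-c' k}$ for a constant $c' > 0$ depending only on $p$. Choosing $k = \lceil \tfrac{1}{c'}\lnEx{\tfrac{1}{\delta'}}\rceil = \Theta(\log\frac{1}{\delta'})$ makes this at most $\delta'$, which gives the claimed success probability $1-\delta'$.

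Finally, for the communication cost, each single execution of \CoFacS is one application of \protone($v,1$) with $v \neq nil$, which by \cref{lemma:protone_message_complexity} uses $\Theta(1)$ messages on expectation. By linearity of expectation over the $k$ independent runs, the total expected number of messages is $\Theta(k) = \Theta(\log\frac{1}{\delta'})$, and the matching lower bound holds since each run sends at least one (broadcast) message. I expect the only point requiring care to be the choice of the Chernoff form and the constant $c'$ so that the bound is uniform in $\ntvsize$ and $n$; the median-preserves-approximation step and the message count are routine.
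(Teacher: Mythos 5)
Your proposal is correct and follows essentially the same route as the paper: run $\Theta(\log\frac{1}{\delta'})$ independent copies, take the median, and use a Chernoff bound on the number of failed runs (the paper bounds the upper tail of the failure count with an explicit constant $d=\frac{45}{2}\ln\frac{1}{\delta'}$, which is equivalent to your lower-tail bound on the success count). Your explicit accounting of the message cost via \cref{lemma:protone_message_complexity} and linearity of expectation is a small addition the paper leaves implicit, but the argument is the same.
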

\begin{proof}
	%By a standard amplification argument we boost the probability to an arbitrary success probability of $1-\delta$ using 
	Choose $d=\frac{45}{2}\ln \frac{1}{\delta'}$ to be the number of copies of the algorithm and return the median of the intermediate results. 
	Let $\mathcal{I}_j$ be the indicator variable for the event that the $j$-th experiment does not result in an 8-approximation.
	By \cref{lemma:constant_factor_approximation} the failure probability can be upper bounded by a constant, i.e.\ $\Prob{\mathcal{I}_j}\leq 0.3$.
	Hence, using a Chernoff bound, the probability that at least half of the experiments do meet the required approximation factor of $8$ is
	\begin{align*}
	\Prob{\sum_{j=1}^d \mathcal{I}_j \geq \frac{1}{2}d}
	&\leq\Prob{\sum_{j=1}^d \mathcal{I}_j \geq \left(1+\frac{2}{3} \right)\cdot 0.3\cdot d}\\
	&\leq e^{-\left(\frac{2}{3}\right)^2\cdot\frac{1}{3}\cdot 0.3\cdot d}
	=e^{-\frac{2}{45}\cdot d}
	=e^{-\frac{2}{45}\cdot \frac{45}{2}\ln \frac{1}{\delta'} }
	=\delta'.
	\end{align*}
	Observe that if at least half of the intermediate results are within the demanded error bound, so is the median. 
	Thus, the algorithm produces an $8$-approximation of $\ntvsize$ with success-probability of at least $1-\delta'$, concluding the proof.
	%\qed
\end{proof}

To obtain an $(\varepsilon, \delta)$-approximation, in \cref{alg:epsilon_factor_approximation} we first apply the \CoFacS algorithm to obtain a rough estimate of $\ntvsize$. 
It is used to compute a probability $p$, which is broadcasted to the nodes, so that every node observing value $v$ sends a message with probability $p$.
Since the \CoFacS result $\tilde n_{\text{const}}^v$ in the denominator of $p$ is close to $\ntvsize$, the number of messages sent on expectation is independent of $\ntvsize$.
The estimated number of nodes observing $v$ is then given by the number of responding nodes $\bar n^v$ divided by $p$, which, on expectation, results in $\ntvsize$.

\begin{algorithm}[ht]
	\textbf{(Node $i$)}\vspace{-3mm}
	\begin{enumerate}[noitemsep]
		\item Receive $p$ from the server.
		\item Send a response message with probability $p$.
	\end{enumerate}

	\textbf{(Server)}\vspace{-3mm}
	\begin{enumerate}[noitemsep]
		\item Set $\delta' \coloneqq \frac{\delta}{3}$
		\item Call \CoFac($v$, $\delta'$) to obtain $\tilde n^{v}_{\text{const}}$.
		\item Broadcast $p=\min \left( 1, \frac{24}{\varepsilon^2 \tilde n^{v}_{\text{const}}}\cdot \ln \frac{1}{\delta'} \right)$. \label[step]{setp}
		\item Receive $\nbarv$ messages.
		\item Compute and output estimated number of nodes in $\ntv$ as $\ntildev = \nbarv / p$.
	\end{enumerate}
	\caption{\EpsFac($v \in D_t$, $\varepsilon, \delta$) \hfill\textit{[for fixed time $t$]}}
	\label{alg:epsilon_factor_approximation}
	\vspace{-3mm}
\end{algorithm}

\begin{lemma}
 \label{lemma:single_shot_epsilon}
 The algorithm \EpsFacS as given in \cref{alg:epsilon_factor_approximation} provides an \edapproximation of $\ntvsize$.
\end{lemma}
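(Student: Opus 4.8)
The plan is to condition on the success of the \CoFacS subroutine and then handle the remaining randomness, coming from the independent coin flips of the nodes in Step~2, via a Chernoff bound. By \cref{corollary:constant_factor_approximation_delta}, with $\delta' = \delta/3$ the call to \CoFacS returns a value $\tilde n^v_{\text{const}}$ with $\tilde n^v_{\text{const}} \in [\ntvsize/8, 8\ntvsize]$ except with probability at most $\delta'$. I would first fix such a ``good'' outcome, which in turn fixes the broadcast probability $p$. Since each node in $N_t^v$ responds independently with probability $p$, the number of responses $\nbarv$ is binomially distributed with mean $\mu \coloneqq \nv p$, and the output $\ntildev = \nbarv/p$ therefore satisfies $\E[\ntildev] = \nv$. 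The event that $\ntildev$ is an $\varepsilon$-approximation of $\nv$ is precisely $|\nbarv - \mu| \le \varepsilon \mu$, so it suffices to bound the probability that $\nbarv$ deviates from its mean by more than an $\varepsilon$-fraction.

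The case $p = 1$ is immediate: every node responds, so $\nbarv = \nv$ and $\ntildev = \nv$ exactly. Otherwise $p = \frac{24}{\varepsilon^2 \tilde n^v_{\text{const}}} \ln \frac{1}{\delta'}$, and here the crucial point is to lower-bound $\mu$. Using the good estimate $\tilde n^v_{\text{const}} \le 8\nv$, hence $\nv \ge \tilde n^v_{\text{const}}/8$, I would compute
\[
\mu = \nv p \ge \frac{\tilde n^v_{\text{const}}}{8} \cdot \frac{24}{\varepsilon^2 \tilde n^v_{\text{const}}} \ln \frac{1}{\delta'} = \frac{3}{\varepsilon^2} \ln \frac{1}{\delta'}.
\]
Plugging this into the two-sided Chernoff bound $\Prob{|\nbarv - \mu| \ge \varepsilon \mu} \le 2 e^{-\varepsilon^2 \mu / 3}$ (valid for $0 < \varepsilon \le 1$) gives at most $2 e^{-\ln(1/\delta')} = 2\delta'$. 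Thus, conditioned on a good estimate, the sampling step fails to yield an $\varepsilon$-approximation with probability at most $2\delta'$.

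Finally, I would combine the two error sources by a union bound: the subroutine returns a value outside $[\ntvsize/8, 8\ntvsize]$ with probability at most $\delta'$, and given a good estimate the sampling deviates with probability at most $2\delta'$, for a total failure probability of at most $3\delta' = \delta$. Hence the algorithm outputs an $\varepsilon$-approximation with probability at least $1-\delta$, which is exactly an \edapproximation. The point requiring care, rather than an outright obstacle, is the conditioning: $\tilde n^v_{\text{const}}$ (which determines $p$) and the node responses come from independent random experiments, so fixing a good outcome of the former leaves $\nbarv$ binomial with the stated mean and makes the Chernoff estimate legitimate. The budgeting $\delta' = \delta/3$ is precisely what is needed so that the $\delta' + 2\delta'$ split closes up to $\delta$, and the constant $24$ in $p$ is what drives $\mu$ above the threshold $\frac{3}{\varepsilon^2}\ln\frac{1}{\delta'}$ after the factor-$8$ loss.
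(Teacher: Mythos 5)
Your proposal is correct and follows essentially the same route as the paper: condition on the \CoFacS estimate satisfying $\tilde n^v_{\text{const}} \le 8\nv$ (failure probability $\delta'$), lower-bound $\mu = \nv p \ge \frac{3}{\varepsilon^2}\ln\frac{1}{\delta'}$, apply a Chernoff bound to the binomial count $\nbarv$, handle $p=1$ trivially, and close with a union bound giving $3\delta'=\delta$. The only cosmetic difference is that the paper bounds the upper and lower tails separately (each by $\delta'$) rather than using the combined two-sided form $2e^{-\varepsilon^2\mu/3}\le 2\delta'$; the accounting is identical.
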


\begin{proof}
The algorithm obtains a constant factor approximation $\tilde n^{v}_{\text{const}}$ with probability $1-\delta'$.
The expected number of messages is
$\E\left[\nbarv\right]=n^v \cdot p$.

We start by estimating the conditional probability that more than $(1+\varepsilon)\, n^v p$ responses are sent under the condition that $\tilde n^{v}_{\text{const}}\leq8n^v$ and $p<1$.
In this case we have $$p = \frac{24}{\varepsilon^2 \tilde n^{v}_{\text{const}}}\cdot \ln \frac{1}{\delta'} \geq \frac{3}{\varepsilon^2 n^v}\cdot \ln \frac{1}{\delta'},$$ hence using a Chernoff bound it follows
\begin{align*}
p_1\coloneqq \Prob{\nbarv \geq (1+\varepsilon) n^v p \left| \tilde n^{v}_{\text{const}}\leq8n^v\wedge p<1\right.}
 \leq e^{-\frac{\varepsilon^2}{3} n^v \cdot \frac{3}{\varepsilon^2 n^v}\cdot \ln \frac{1}{\delta'}}
 = \delta'.
\end{align*}
Likewise the probability that less than $(1-\varepsilon)\, n^v p$ messages are sent under the condition that $\tilde n^{v}_{\text{const}}\leq8n^v$ and $p<1$ is %the constant approximation succeeded is %(and thus $p \leq \frac{3}{\varepsilon^2 n^v}\cdot \ln \frac{1}{\delta'}$)%(again, if $p<1$) is
\begin{align*}
p_2 &\coloneqq \Prob{\nbarv \leq (1-\varepsilon)n^v p\left| \tilde n^{v}_{\text{const}}\leq8n^v\wedge p<1\right.}\\
%  \leq e^{-\frac{\varepsilon^2}{2} n^v p}
 &\leq e^{-\frac{\varepsilon^2}{2}n^v\cdot \frac{3}{\varepsilon^2 n^{v} }\cdot \ln \frac{1}{\delta'}}
 \leq e^{-\frac{3}{2} \ln \frac{1}{\delta'}}
 < \delta'
 .
\end{align*}
Next consider the case that $\tilde n^{v}_{\text{const}} > 8n^v$ and $p <1$ holds.
Using $$\Prob{\tilde n^{v}_{\text{const}} > 8n^v} \leq \Prob{\tilde n^{v}_{\text{const}} > 8n^v \vee \tilde n^{v}_{\text{const}} < \frac{n^v}{8}} \leq \delta'$$ and $p_i\cdot \Prob{\tilde n^{v}_{\text{const}} \leq 8n^v}\leq p_i$ for $i\in\{1,2\}$,
\begin{align*}
\Prob{(1-\varepsilon)n^v p < \nbarv < (1+\varepsilon) n^v p\left| p < 1\right.} \hspace{3cm} \\
\geq 1 - \left(\Prob{\tilde n^{v}_{\text{const}} > 8n^v} + (p_1+p_2)\right)
				     \geq 1-3\delta' = 1-\delta.
\end{align*}
For the last case $p=1$, we have 
%\begin{align*}
$
 \Prob{(1-\varepsilon)n^v p < \nbarv < (1+\varepsilon)n^v p\left| p \geq 1\right.} = 1,
 $
%\end{align*}
by using $\nbarv = n^v$. 
Now, $\Prob{(1-\varepsilon)n^v p < \nbarv < (1+\varepsilon)n^v p} \geq 1-\delta$ directly follows.
%\qed
\end{proof}

\begin{lemma}
\label{lemma:msg_epsilonApprox}
Algorithm \EpsFacS as given in \cref{alg:epsilon_factor_approximation} uses $\Theta(\frac{1}{\varepsilon^2}\log \frac{1}{\delta})$ messages on expectation.
\end{lemma}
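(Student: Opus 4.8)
The plan is to decompose the expected message count into the two distinct communication phases of \EpsFacS and bound each separately. The first phase is the call to \CoFac($v$, $\delta'$) in Step~2, which by \cref{corollary:constant_factor_approximation_delta} uses $\Theta(\log \frac{1}{\delta'}) = \Theta(\log \frac{1}{\delta})$ messages on expectation (recall $\delta' = \delta/3$, so the logarithms agree up to constants). The second phase consists of the single broadcast in Step~3 together with the $\nbarv$ response messages in Step~4. The broadcast contributes one message. Thus the total expected cost is $\Theta(\log \frac{1}{\delta}) + 1 + \E[\nbarv]$, and the whole argument reduces to showing $\E[\nbarv] = \Theta(\frac{1}{\varepsilon^2}\log\frac{1}{\delta})$.

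To bound $\E[\nbarv]$, I would condition on the outcome of the \CoFac\ call and use the law of total expectation. Each node observing $v$ responds independently with probability $p$, so conditioned on a fixed value of $\tilde n^{v}_{\text{const}}$ we have $\E[\nbarv] = n^v \cdot p$ whenever $p < 1$, and $\E[\nbarv] = n^v \le \frac{24}{\varepsilon^2}\ln\frac{1}{\delta'}$ when $p = 1$ (since $p=1$ forces $\frac{24}{\varepsilon^2 \tilde n^{v}_{\text{const}}}\ln\frac{1}{\delta'} \ge 1$, i.e. $\tilde n^{v}_{\text{const}} \le \frac{24}{\varepsilon^2}\ln\frac{1}{\delta'}$). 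In the case $p<1$, substituting the definition of $p$ gives
\begin{align*}
\E[\nbarv \mid \tilde n^{v}_{\text{const}}] = n^v \cdot \frac{24}{\varepsilon^2 \tilde n^{v}_{\text{const}}}\ln\frac{1}{\delta'} = \frac{24}{\varepsilon^2}\ln\frac{1}{\delta'} \cdot \frac{n^v}{\tilde n^{v}_{\text{const}}}.
\end{align*}
The crucial observation is that the ratio $n^v / \tilde n^{v}_{\text{const}}$ is $\Theta(1)$ precisely on the event that \CoFac\ succeeds, which by \cref{corollary:constant_factor_approximation_delta} happens with probability at least $1-\delta'$. On the good event $\tilde n^{v}_{\text{const}} \in [n^v/8, 8n^v]$, so $n^v/\tilde n^{v}_{\text{const}} \in [1/8, 8]$ and the conditional expectation is $\Theta(\frac{1}{\varepsilon^2}\log\frac{1}{\delta'})$.

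The main obstacle is controlling the contribution of the failure event, where \CoFac\ returns an estimate $\tilde n^{v}_{\text{const}}$ that may be arbitrarily small (badly underestimating $n^v$), which would inflate $p$ and hence $\E[\nbarv]$. I would handle this by noting that $\tilde n^{v}_{\text{const}} = 2^r$ for the response round $r$, so it is bounded below by a constant, meaning $p$ is bounded above by $\frac{24}{\varepsilon^2}\ln\frac{1}{\delta'}$ times a constant regardless of success, and therefore $\nbarv \le n^v$ always holds deterministically (a node responds at most once, so the number of responders never exceeds the number of nodes observing $v$). This deterministic cap combined with the small failure probability $\delta'$ keeps the bad-event contribution negligible: even a worst-case conditional value is multiplied by probability at most $\delta'$. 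Assembling the two cases via total expectation then yields $\E[\nbarv] = \Theta(\frac{1}{\varepsilon^2}\log\frac{1}{\delta})$, and adding the $\Theta(\log\frac{1}{\delta})$ cost of the \CoFac\ phase and the single broadcast gives the claimed $\Theta(\frac{1}{\varepsilon^2}\log\frac{1}{\delta})$ total, completing the proof.
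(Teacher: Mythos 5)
You correctly identify the main obstacle --- the contribution of the event that \CoFacS underestimates $\nv \coloneqq \ntvsize$, which inflates $p$ --- but your proposed resolution does not close it. Bounding the bad-event contribution by (worst-case value) $\times$ (failure probability) gives $\nv \cdot \delta'$, and this is \emph{not} negligible: $\nv$ can be as large as $n$, and $\delta$ is a fixed constant independent of $n$, so $\nv \delta'$ can vastly exceed $\frac{1}{\varepsilon^2}\log\frac{1}{\delta}$ (take $\delta = 1/3$ and $\nv = n = 10^{12}$). The observation that $\tilde n^{v}_{\text{const}} = 2^r$ is bounded below by a constant only yields $p \le 1$, which is no cap at all, and the deterministic bound $\nbarv \le \nv$ is exactly the trivial bound that the coarse union of ``good event / bad event'' cannot absorb. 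One also cannot fix this by driving the failure probability of \CoFacS down to $1/n$, since that would require $\Theta(\log n)$ parallel instances and break the message bound.

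The paper instead peels the failure event into layers: for each $i \ge 0$ it bounds $\Prob{\tilde n^{v}_{\text{const}} < \frac{1}{8\cdot 2^i}\nv} \leq e^{-2^{i+3}}$ (by the same tail calculation as in \cref{lemma:constant_factor_approximation}), while conditioned on $\tilde n^{v}_{\text{const}} \geq \frac{1}{8\cdot 2^i}\nv$ the expected number of responses is only $2^i \cdot \Theta\left(\frac{1}{\varepsilon^2}\log\frac{1}{\delta}\right)$. The cost grows geometrically in $i$ but the probability decays doubly exponentially, so the sum $\sum_i 2^{i+1} e^{-2^{i+3}}$ converges to a constant. This fine-grained tail bound on \emph{how badly} \CoFacS underestimates --- not merely \emph{whether} it fails --- is the missing ingredient; your decomposition into the \CoFacS phase, the broadcast, and $\E[\nbarv]$, and your treatment of the success case and the $p=1$ case, are otherwise consistent with the paper's argument.
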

\begin{proof}
	Recall that each of the $n^v$ nodes sends a message with probability $p$, leading to $n^v \cdot p$ messages on expectation.
	First assume that the constant factor approximation was successful, i.e.\ $\frac{n_1}{8} \leq \tilde n^v_\text{const} \leq 8 n_1$.
	If $p<1$, we have 
	$$n^v \cdot p = n^v \frac{24}{\varepsilon^2 \tilde n^{v}_{\text{const}}}\cdot \ln \frac{1}{\delta'} \leq \frac{24\cdot 8}{\varepsilon^2}\cdot \ln \frac{1}{\delta'}=\Theta\left( \frac{1}{\varepsilon^2}\log \frac{1}{\delta} \right).$$
	If $p=1$, by definition %, receive exactly $\nv$ messages. By definition this is only the case if 
	$\frac{24}{\varepsilon^2 \tilde n^{v}_{\text{const}}}\cdot \ln \frac{1}{\delta'} \geq 1$, hence $\tilde n^{v}_{\text{const}} = \bigO\left( \frac{1}{\varepsilon^2}\cdot \log \frac{1}{\delta'} \right) $.
	Thus, $n^v p \leq 8 \tilde n^{v}_{\text{const}} p = \bigO\left( \frac{1}{\varepsilon^2}\cdot \log \frac{1}{\delta'} \right)$.
	
	For the case that the constant factor approximation was not successful,
	note that $\Prob{\tilde n^{v}_{\text{const}} < \frac{1}{8 \cdot 2^i} n^{v}} \leq \frac{1}{e^{2^{i+3}}}$ holds analogously to the calculation in \cref{lemma:constant_factor_approximation}.
	Also, for $\tilde n^{v}_{\text{const}} \geq \frac{1}{8 \cdot 2^i} n^{v}$ and $p<1$, we have 
	$$n^v p \leq 8 \cdot 2^i\cdot \tilde n^{v}_{\text{const}}\cdot \frac{24}{\varepsilon^2 \tilde n^{v}_{\text{const}}}\cdot \ln \frac{1}{\delta}= 2^i\cdot \Theta\left( \frac{1}{\varepsilon^2}\log \frac{1}{\delta} \right).$$
	Similarly, for $p=1$, we have $n^v p \leq 8 \cdot 2^i\cdot \tilde n^{v}_{\text{const}} = 2^i\cdot \Theta\left( \frac{1}{\varepsilon^2}\log \frac{1}{\delta} \right)$ as in this case, $\tilde n^{v}_{\text{const}} = \bigO\left( \frac{1}{\varepsilon^2}\cdot \log \frac{1}{\delta'} \right)$.
	Hence, we can conclude
	\begin{align*}
	\mathbb{E}\left[\nbarv\right] &\leq \Theta\left( \frac{1}{\varepsilon^2}\log \frac{1}{\delta} \right) \cdot \Prob{\tilde n^v_{\text{const}}\geq \frac{1}{8}n^v} \\&\qquad + \sum_{i=0}^\infty \Prob{\frac{1}{8 \cdot 2^{i+1}} n^{v} \leq \tilde n^v_{\text{const}} < \frac{1}{8 \cdot 2^i} n^{v}}\cdot 2^{i+1}\cdot \Theta\left( \frac{1}{\varepsilon^2}\log \frac{1}{\delta} \right)
	\end{align*}
	\begin{align*}
	&\leq \Theta\left( \frac{1}{\varepsilon^2}\log \frac{1}{\delta} \right) \left(1 + \sum_{i=0}^\infty \frac{2^{i+1}}{e^{2^{i+3}}}\right)
	\leq \Theta\left( \frac{1}{\varepsilon^2}\log \frac{1}{\delta} \right) \left(1 + \sum_{i=0}^\infty 2^{i+1-2^{i+3}}\right)\\
	&\leq \Theta\left( \frac{1}{\varepsilon^2}\log \frac{1}{\delta} \right) \left(1 + \sum_{i=0}^\infty 2^{-i}\right) = \Theta\left( \frac{1}{\varepsilon^2}\log \frac{1}{\delta} \right).
	\end{align*}
	%\qed
\end{proof}

\begin{theorem}
\label{theorem:epsilon_factor_approximation}
There exists an algorithm that provides an ($\varepsilon$,$\delta$)-approximation for the Frequency Monitoring Problem for $T$ time steps with an expected number of $\Theta\left(\sum_{t \in T}|D_t|\frac{1}{\varepsilon^2} \log \frac{|D_t|}{\delta}\right)$ messages.
\end{theorem}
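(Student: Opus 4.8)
The plan is to reduce the multi-value, multi-step problem to repeated invocations of the single-value, single-step building block \EpsFacS, controlling the overall error by a union bound whose cost is absorbed inside the logarithm. First I would, at every time step $t$, call \protoneS$(nil,1)$ to learn the domain $D_t$. By \cref{lemma:protone_message_complexity} this costs $\bigO(|D_t|)$ messages in expectation and, crucially, it determines $D_t$ and hence $|D_t|$ exactly (with probability $1$), since for each observed value at least one node broadcasts. This exact knowledge of $|D_t|$ is what lets the algorithm calibrate the per-value failure probability in the estimation phase that follows.

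Next, with $|D_t|$ in hand, I would run, for each $v \in D_t$ independently, the algorithm \EpsFacS$(v,\varepsilon,\delta/|D_t|)$. By \cref{lemma:single_shot_epsilon} each such call returns an $\varepsilon$-approximation of $\ntvsize$ with probability at least $1-\delta/|D_t|$. A union bound over the $|D_t|$ values then shows that all frequency estimates at time $t$ are simultaneously correct with probability at least $1 - |D_t|\cdot \frac{\delta}{|D_t|} = 1-\delta$. Since the domain identification never errs, this already yields the required \edapproximation for the Frequency Monitoring Problem at every time step.

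For the message complexity, I would invoke \cref{lemma:msg_epsilonApprox} with the scaled parameter $\delta/|D_t|$ in place of $\delta$: a single call then costs $\Theta\!\left(\frac{1}{\varepsilon^2}\log\frac{|D_t|}{\delta}\right)$ messages in expectation. Summing over the $|D_t|$ values of a fixed step gives $\Theta\!\left(|D_t|\frac{1}{\varepsilon^2}\log\frac{|D_t|}{\delta}\right)$, and the additive $\bigO(|D_t|)$ cost of domain identification is dominated by this term. Summing over all $T$ steps yields the claimed $\Theta\!\left(\sum_{t\in T}|D_t|\frac{1}{\varepsilon^2}\log\frac{|D_t|}{\delta}\right)$ bound, where both the upper and the matching lower order follow from the two-sided ($\Theta$) guarantee of \cref{lemma:msg_epsilonApprox}.

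The main obstacle is conceptual rather than computational: the per-value error budget must be set to $\delta/|D_t|$, which presupposes that $|D_t|$ is known before estimation begins. I expect the cleanest resolution to be exactly the two-phase structure above, first determining $D_t$ exactly and essentially for free (relative to the dominant term), then estimating. The one quantitative point worth checking carefully is that scaling the failure probability from $\delta$ to $\delta/|D_t|$ inflates the per-call cost only additively, as the $\log|D_t|$ term inside the logarithm; this is precisely the source of the extra factor distinguishing the bound from the naive $|D_t|\cdot\frac{1}{\varepsilon^2}\log\frac{1}{\delta}$, and verifying that the union-bound cost stays captured within this logarithmic term (rather than entering multiplicatively) is the key step.
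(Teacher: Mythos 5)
Your proposal matches the paper's own proof essentially verbatim: identify $D_t$ exactly via \protoneS$(nil,1)$ at cost $\Theta(|D_t|)$, run \EpsFacS$(v,\varepsilon,\delta/|D_t|)$ for each $v\in D_t$, take a union bound to get success probability $1-\delta$ per step, and sum the per-call cost from \cref{lemma:msg_epsilonApprox} over values and time steps. The only difference is that you spell out the (correct) observations the paper leaves implicit, namely that domain identification never errs and that the union-bound cost enters only inside the logarithm.
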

\begin{proof}
 In every time step $t$ we first identify $D_t$ by applying \protone\ using $\Theta \left( |D_t| \right)$ messages on expectation.
 On every value $v \in D_t$ we then perform algorithm \EpsFac($v$,$\varepsilon$,$\frac{\delta}{|D_t|}$), resulting in an amount of $\Theta \left( |D_t| \frac{1}{\varepsilon^2} \log \frac{|D_t|}{\delta} \right)$ messages on expectation for a single time step, while achieving a probability (using a union bound) of $1-\frac{|D_0|\delta}{|D_0|}=1-\delta$ that in one time step the estimations for every $v$ are $\varepsilon$-approximations.
 Applied for each of the $T$ time steps, we obtain a bound as claimed.
 \hfill %\qed
\end{proof}

\subsection*{A Parameterised Analysis}
Applying \EpsFacS in every time step is a good solution in worst case scenarios.
But if we assume that the change in the set of nodes observing a value is small in comparison to the size of the set, we can do better.

We extend the \EpsFacS such that in settings where from one time step to another only a small fraction $\sigma$ of nodes change the value they measure, the amount of communication can be reduced, while the quality guarantees remain intact.
We define $\sigma$ such that
\begin{equation*}
% \forall t: \sigma \geq \frac{|(N^v_t \cup N^v_{t+1}) \setminus (N^v_{t+1} \cap N^v_{t})|}{\ntvsize}.
 \forall t: \sigma \geq \frac{| N_{t-1}^v \setminus \ntv| +  | \ntv \setminus N_{t-1}^v|}{\ntvsize}.
\end{equation*}
Note that this also implies that $D_t=D_{t-1}$ holds for all time steps $t$, i.e. the set of measured values stays the same over time.

The extension is designed so that compared to \EpsFac, also in settings with many changes the solution quality and message complexity asymptotically does not increase.
The idea is the following: For a fixed value $v$, in a first time step \EpsFacS is executed (defining a probability $p$ in \cref{setp} of \cref{alg:epsilon_factor_approximation}).
In every following time step, up to $1/\delta$ consecutive time steps, nodes that start or stop measuring a value $v$ send a message to the server with the same probability $p$, while nodes that do not observe a change in their value remain silent.
In every time step $t$, the server uses the accumulated messages from the first time step and all messages from nodes that started measuring $v$ in time steps $2 \dots t$, while subtracting all messages from nodes that stopped measuring $v$ in the time steps $2\dots t$.
This accumulated message count is then used similarly as in \EpsFacS to estimate the total number of nodes observing $v$ in the current time step.
The algorithm starts again if a) $1/\delta$ time steps are over, so that the probability of a good estimation remains good enough, or b) the sum of estimated nodes to start/stop measuring value $v$ is too large. 
The latter is done to ensure that the message probability $p$ remains fitting to the number of nodes, ensuring a small amount of communication, while guaranteeing an $(\varepsilon, \delta)$-approximation.

Let $n_t^+, n_t^-$ be the number of nodes that start measuring $v$ in time step $t$ or that stop measuring it, respectively, i.e.\
$n_t^+ = | \ntv \setminus N_{t-1}^v|, n_t^- = | N_{t-1}^v \setminus \ntv|$, and $\bar n_t^+$ and $\bar n_t^-$ the number of them that sent a message to the server in time step $t$.
In the following we call nodes contributing to $n^+_t$ and $n^-_t$ \emph{entering} and \emph{leaving}, respectively. 

\begin{algorithm}[ht]

\textbf{(Node $i$)}\vspace{-3mm}
\begin{enumerate}[noitemsep]
 \item If $t=1$, take part in \EpsFac\ called in \cref{alg:ContinuousEpsilonApproximation:start} by the server. \label[step]{alg:ContEpsApprox:1}
 \item If $t>1$, broadcast a message with probability $p$ if $v_i^{t-1}=v \wedge v_i^t \neq v$ \\or $v_i^{t-1} \neq v \wedge v_i^t = v$.
\end{enumerate}

\textbf{(Server)}\vspace{-3mm}
\begin{enumerate}[noitemsep]
    \item Set $\delta' \coloneqq \delta^2$.
    \item Set $t \coloneqq 1$ and run \EpsFac($v$, $\varepsilon/3$, $\delta$) to obtain $\bar n_1, p$. \label[step]{alg:ContinuousEpsilonApproximation:start}
    \item Output $\tilde n_1=\frac{\bar n_1}{p}$.
    \item Repeat at the beginning of every new time step $t>1$:
    \begin{enumerate}[noitemsep, leftmargin=7mm]
	\item Receive messages from nodes changing the observed value to obtain $\bar n_t^+$ and $\bar n_t^-$.
	\item Break if $t \geq 1/\delta$ or $\left(\sum_{i=1}^t \bar n_i^+ + \sum_{i=1}^t \bar n_i^-\right)/p \geq \bar n_1/2$.
	\item Output $\tilde n_t=\left(\bar n_1+\sum_{i=1}^t \bar n_i^+ - \sum_{i=1}^t \bar n_i^-\right)/p$.
    \end{enumerate}
    \item Go to \cref{alg:ContinuousEpsilonApproximation:start}.
\end{enumerate}
\caption{\ContEpsFac($v$, $\varepsilon$, $\delta$)}
\label{alg:ContinuousEpsilonApproximation}
\vspace{-3mm}
\end{algorithm}

\begin{lemma}
 \label{lemma:frequencies_multiple_step_correctness}
 For any $v \in D_1$, the algorithm \ContEpsFacS provides an \edapproximation of $\ntvsize$.
%  The algorithm \ContEpsFacS provides an \edapproximation of $\ntvsize$ for any $v \in D_0$ at any time step $t$.
\end{lemma}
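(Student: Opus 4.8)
The plan is to reduce the claim, for a fixed step $t$ of a run, to a Chernoff bound on a sum of independent Bernoulli trials, after separating the two sources of error: the sampling rate $p$ chosen at the start of the run, and the deviation of the accumulated sample $\bar n_1 + \sum_{i=2}^t \bar n_i^+ - \sum_{i=2}^t \bar n_i^-$ from its mean. The base case $t=1$ is immediate, since \cref{alg:ContinuousEpsilonApproximation} invokes \EpsFacS with error parameter $\varepsilon/3 \le \varepsilon$, so \cref{lemma:single_shot_epsilon} already gives an \edapproximation at the first step of the run.

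First I would establish that the estimator is unbiased for every fixed $p$. Each node of $N_1^v$ sends at time $1$ with probability $p$; each node entering at a time $i$ sends with probability $p$ (contributing to $\bar n_i^+$); each node leaving at time $i$ sends with probability $p$ (contributing to $\bar n_i^-$); and all these coin flips are independent, even for a node that enters, leaves and re-enters, since it flips a fresh coin at each transition. Writing $C_t$ for the accumulated count, linearity gives $\E[C_t \mid p] = p\paren{\abs{N_1^v} + \sum_{i=2}^t n_i^+ - \sum_{i=2}^t n_i^-}$. The per-step identity $\abs{N_i^v} = \abs{N_{i-1}^v} + n_i^+ - n_i^-$ telescopes to $\ntvsize$, so $\E[C_t \mid p] = p\,\ntvsize$ and hence $\E[\tilde n_t \mid p] = \ntvsize$.

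Next I would condition on the event that the initial \CoFacS estimate is a constant-factor approximation of $\abs{N_1^v}$; by \cref{lemma:constant_factor_approximation} and the amplification inside \EpsFacS this fails only with a prescribed fraction of $\delta$, and on this event $p\,\abs{N_1^v} = \OTheta{\frac{1}{\varepsilon^2}\log\frac{1}{\delta'}}$. The structural input is the break condition of \cref{alg:ContinuousEpsilonApproximation}: not having broken by step $t$ bounds the estimated number of entering and leaving nodes by half of the initial estimate, which I would turn into the true-count bound $\sum_{i=2}^t (n_i^+ + n_i^-) \le \tfrac12\abs{N_1^v}$. This yields both $\ntvsize \ge \abs{N_1^v} - \sum_{i=2}^t n_i^- \ge \tfrac12\abs{N_1^v}$, so that $\ntvsize = \OTheta{\abs{N_1^v}}$ and $p\,\ntvsize = \OTheta{\frac{1}{\varepsilon^2}\log\frac{1}{\delta'}}$, and that the total number of trials comprising $C_t$ is $\OTheta{\ntvsize}$.

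Finally I would split $C_t$ into its positive trials (from $\bar n_1$ and the $\bar n_i^+$) and negative trials (the $\bar n_i^-$), apply a multiplicative Chernoff bound to each -- both being sums of independent Bernoulli$(p)$ variables with $\OTheta{\ntvsize}$ summands -- and conclude via the triangle inequality that $\abs{C_t - p\,\ntvsize} \le \varepsilon\, p\,\ntvsize$, i.e.\ $\abs{\tilde n_t - \ntvsize} \le \varepsilon\,\ntvsize$, except with probability $\bigO(\delta')$. Setting $\delta' = \delta^2$ keeps each per-step failure at $\bigO(\delta^2)$, so a union bound over the at most $1/\delta$ steps of the run, together with the $\bigO(\delta)$ failure of the initial rate, gives overall success probability $1-\delta$ after fixing constants. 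I expect the main obstacle to be the translation of the break condition -- stated in terms of the \emph{sampled} counts $\bar n_i^{\pm}$ -- into the true-count bound on $\sum_{i=2}^t(n_i^+ + n_i^-)$ used above, precisely because in this regime the change-samples are sparse; handling it cleanly requires invoking the break threshold jointly with the Chernoff estimates rather than reasoning about the true change counts in isolation.
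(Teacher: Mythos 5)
Your proposal follows essentially the same route as the paper's own proof: express $\tilde n_t$ as the telescoping sum $\tilde n_1 + \sum_{i=2}^t(\tilde n_i^+ - \tilde n_i^-)$, use linearity of expectation to see it is unbiased for $\ntvsize$, concentrate it with a Chernoff bound as in \cref{lemma:single_shot_epsilon}, and finish with a union bound over the at most $1/\delta$ steps of a run using $\delta'=\delta^2$. You are in fact more careful than the paper -- in particular about converting the break condition on the \emph{sampled} change counts into a bound on the true change counts, a point the paper's proof silently skips -- but this is elaboration, not a different argument.
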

\begin{proof}
	By the same arguments as in \cref{lemma:single_shot_epsilon}, we obtain an ($\varepsilon$,$\delta'$)-approxima\-tion of $n_1$.
	In any further time step we compute our estimate over the sum of all received messages ($\bar n_1$, arrivals and departures). 
	If too many nodes change their measured value, we redo a complete estimation of the nodes in $\ntv$.
	
	Recall that $\tilde n_t$ is the random variable giving the estimated number of nodes by the algorithm, and $\tilde n_t^+=\frac{\bar n^+}{p}, \tilde n_t^-=\frac{\bar n^-}{p}$ are the random variables giving the estimated arrivals and departures in that time step.
	We look at any time step $t>1$ where the restart criteria are not met:
	Since $\tilde n_t=\tilde n_1 + \sum_{i=2}^t \left( \tilde n_i^+ - \tilde n_i^- \right)$ and the linearity of expectation,
	for any time $t \geq 1$ we can use a Chernoff bound as in \cref{lemma:single_shot_epsilon} to show that the estimation is an $(\varepsilon,\delta')$-approximation.
	
	Using a union bound on the fail probability of up to $1/\delta$ time steps, we get a $1 - \frac{1}{\delta} \cdot \delta'=1-\delta$ probability of having a correct estimation in any time step.
	\hfill %\qed
\end{proof}

\begin{lemma}
  \label{lemma:frequencies_multiple_step_complexity}
  For a fixed value $v$ and $T'=\min\{\frac{1}{2\sigma},\frac{1}{\delta}\}$, $\sigma \leq \frac{1}{2}$, time steps, \ContEpsFacS uses $\Theta \left(\frac{1}{\varepsilon^2} \log \frac{1}{\delta} \right)$ messages on expectation.
\end{lemma}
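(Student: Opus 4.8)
The plan is to split the communication over the $T'$ time steps into the cost of the expensive \emph{restart} steps, in which a fresh instance of \EpsFacS is invoked (\cref{alg:ContinuousEpsilonApproximation:start} of \cref{alg:ContinuousEpsilonApproximation}), and the cost of the cheap \emph{update} steps, in which only entering and leaving nodes answer with probability $p$. By \cref{lemma:msg_epsilonApprox}, a single invocation of \EpsFacS with parameter $\varepsilon/3$ costs $\BigO{\frac{1}{\varepsilon^2}\log\frac1\delta}$ messages on expectation (the constant $9$ from $\varepsilon/3$ being absorbed); since at least the invocation at $t=1$ always takes place, this already yields the lower bound $\OOmega{\frac{1}{\varepsilon^2}\log\frac1\delta}$, and it remains to prove the matching upper bound.

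First I would establish the key structural fact that the frequency stays within a constant factor of its initial value throughout the window, i.e.\ $\abs{N_t^v} = \OTheta{\abs{N_1^v}}$ for all $t \le T'$. This follows from the similarity assumption $n_t^+ + n_t^- \le \sigma\abs{N_t^v}$, which gives $\abs{N_{t-1}^v}/(1+\sigma) \le \abs{N_t^v} \le \abs{N_{t-1}^v}/(1-\sigma)$ and hence, after at most $T' \le \frac{1}{2\sigma}$ steps, a multiplicative drift of at most $(1-\sigma)^{-1/(2\sigma)}$, which is bounded by a constant for $\sigma \le 1/2$. Consequently the sampling probability $p = \OTheta{\frac{1}{\varepsilon^2\abs{N_1^v}}\log\frac1\delta}$ computed by the \EpsFacS invocation of each (successful) run remains a correct sampling probability for the current frequency at every step of that run, up to constant factors.

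Given this, the update steps are easy to bound: in step $t$ the expected number of answers is $(n_t^+ + n_t^-)\,p \le \sigma\abs{N_t^v}\,p = \BigO{\sigma\,\frac{1}{\varepsilon^2}\log\frac1\delta}$, using $\abs{N_t^v}=\OTheta{\abs{N_1^v}}$ and $p=\OTheta{\frac{1}{\varepsilon^2\abs{N_1^v}}\log\frac1\delta}$. Summing over the at most $T' \le \frac{1}{2\sigma}$ update steps yields $\BigO{\frac{1}{\varepsilon^2}\log\frac1\delta}$ in total, as the factors $\sigma$ and $1/\sigma$ cancel. The rare event that a \CoFacS underestimates the frequency is handled exactly as in \cref{lemma:msg_epsilonApprox}: the contribution of the case $\tilde n^v_{\text{const}} < \abs{N_1^v}/(8\cdot 2^i)$ is damped by its probability $e^{-2^{i+3}}$, so the associated geometric-type tail sum still converges to $\BigO{\frac{1}{\varepsilon^2}\log\frac1\delta}$.

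The main obstacle is bounding the number of restart steps within the window, since every restart adds a further $\BigO{\frac{1}{\varepsilon^2}\log\frac1\delta}$ term. I would argue that this number is $\BigO{1}$ in expectation. A type-(a) restart (the condition $t \ge 1/\delta$) needs a run of length at least $1/\delta$, so since the whole window has length $T' \le 1/\delta$ at most one such restart can occur. For type-(b) restarts (the condition $\paren{\sum_i \bar n_i^+ + \sum_i \bar n_i^-}/p \ge \bar n_1/2$) I would use that in expectation the estimated accumulated change equals the true accumulated change, which over the window totals at most $\sum_{t\le T'}\sigma\abs{N_t^v} = \BigO{\abs{N_1^v}}$ by the drift bound and $T' \le \frac{1}{2\sigma}$; since each type-(b) restart consumes $\OOmega{\abs{N_1^v}}$ of this budget (all reference frequencies being $\OTheta{\abs{N_1^v}}$), only $\BigO{1}$ of them can occur. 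Making this last step rigorous is the delicate part, because the restart criterion is phrased over the random sampled counts rather than the true ones, so one has to relate the expected number of triggered restarts to the expected total drift, e.g.\ via a stopping-time or Wald-type argument, rather than treating the restart events as deterministic.
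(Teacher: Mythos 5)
Your proposal follows essentially the same route as the paper's proof: bound the expected number of messages of one run by $\bigl(n_1+\sum_{i=2}^{T'}(n_i^++n_i^-)\bigr)p\le n_1(1+T'\sigma)p=\Theta\bigl(\frac{1}{\varepsilon^2}\log\frac{1}{\delta}\bigr)$ using $T'\sigma\le\frac12$, and absorb the event that the constant-factor estimate is too small via the exponentially decaying tail already established in \cref{lemma:msg_epsilonApprox}. The two points you single out as delicate are exactly the ones the paper glosses over: it bounds the accumulated change by $T'\sigma n_1$ without proving your drift estimate $|N_t^v|=\Theta(|N_1^v|)$ (which is needed because the similarity condition is relative to $|N_t^v|$, not $|N_1^v|$), and it does not count restarts at all---it simply charges a single \EpsFacS invocation per window of length $T'$---so your observation that a type-(b) break could trigger further invocations, and your Wald-type sketch for showing their expected number is $\bigO{(1)}$, is a genuine (and still incomplete, as you note) strengthening rather than something you will find resolved in the paper's own argument.
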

 \begin{proof}
 	%  We showed that \cref{alg:ContinuousEpsilonApproximation} gives an $(\varepsilon,\delta)$-approximation for up to $T'=\min\{\frac{1}{2\sigma},\frac{1}{\delta}\}$ time steps in a row.
 	The message complexity depends on the initial size $|N_1^v|$ and on the number of nodes leaving and entering $N^v$ in those time steps, which is bounded by $\sigma$.
 	If \EpsFac\ obtained a correct probability $p$  in \cref{alg:ContEpsApprox:1}, i.e. $p = \Theta(\frac{1}{n_1})$, the expected number of messages (in case $p<1$) is
 	
 	\begin{align*}
 	\mathbb{E} \left[\sum_{t=1}^{T'} \bar n_t \, \middle|\, p = \Theta \left(\frac{1}{n_1} \right)\right]
 	&=\mathbb{E} \left[\bar n_1 + \sum_{i=2}^{T'} \bar n_i^+ + \bar n_i^- \, \middle| \, p = \Theta \left(\frac{1}{n_1} \right)\right]
 	\\
 	&=\left(n_1 + \sum_{i=2}^{T'} n_i^+ + n_i^- \right) p
 	\leq\left(n_1 + T' \sigma n_1\right)p
 	\\
 	&= n_1 \left(1 + T' \sigma \right) \cdot 24 \cdot \frac{1}{\varepsilon^2 \tilde n^v_{\text{const}}} \ln \frac{1}{\delta'}
 	\\
 	&= \Theta \left( \left(1+ \min\left\{\frac{1}{2\sigma},\frac{1}{\delta}\right\} \sigma \right) \cdot 1/\varepsilon^2 \log \frac{1}{\delta} \right).
 	\end{align*}
 	Considering the case where \EpsFacS estimated wrong, the message complexity could increase greatly if the probability $p$ is too large for the actual number of nodes (i.e. an underestimation leads to high message complexity). 
 	But the probability to misestimate by some constant factor (which would increase the message complexity by that factor) decreases exponentially in this factor (as shown in \cref{lemma:msg_epsilonApprox} for \EpsFac), leaving the expected number of messages to be
 	$\Theta \left( \left(1+ \min\left\{\frac{1}{2\sigma},\frac{1}{\delta}\right\} \sigma \right) \cdot \frac{1}{\varepsilon^2}\cdot \log \frac{1}{\delta} \right)
 	=\Theta \left( \frac{1}{\varepsilon^2} \log \frac{1}{\delta} \right)$.
 	\hfill %\qed
 \end{proof}

\begin{theorem}
\label{th:fmptheorem}
  There exists an ($\varepsilon$,$\delta$)-approximation algorithm for the Frequency Monitoring Problem for $T$ consecutive time steps which uses an amount of 
  $\Theta \left( |D_1| \left( 1+T\cdot \max\{2\sigma, \delta \} \right) \frac{1}{\varepsilon^2} \log \frac{|D_1|}{\delta} \right)$ messages on expectation, if $\sigma \leq 1/2$.
\end{theorem}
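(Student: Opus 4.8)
The plan is to run, for every value of the domain, an independent copy of \ContEpsFacS and to glue the single-value guarantees of \cref{lemma:frequencies_multiple_step_correctness,lemma:frequencies_multiple_step_complexity} together by a union bound over all values. First I would exploit the fact, already noted below the definition of $\sigma$, that the $\sigma$-similarity assumption forces $D_t = D_1$ for every time step $t$; hence the set of values never changes and it suffices to identify the domain a single time. This initial identification is done by one call to \protoneS, costing $\Theta(|D_1|)$ messages on expectation by \cref{lemma:protone_message_complexity}, a term that will be dominated by the frequency estimation. For each of the $|D_1|$ values I then start a separate instance of \ContEpsFacS, run with failure parameter $\delta/|D_1|$ in place of $\delta$; this substitution is exactly what turns the $\log\frac{1}{\delta}$ of \cref{lemma:frequencies_multiple_step_complexity} into the $\log\frac{|D_1|}{\delta}$ appearing in the claimed bound.

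For correctness I would argue as follows. By \cref{lemma:frequencies_multiple_step_correctness}, a single instance run with parameter $\delta/|D_1|$ provides, with error probability at most $\delta/|D_1|$, an \edapproximation of $\ntvsize$ in every time step of its run, and the internal restarts of \ContEpsFacS merely re-initialise the estimator without harming this guarantee across the whole horizon. Taking a union bound over the $|D_1|$ values then shows that, in any fixed time step, all frequency estimates are simultaneously correct with probability at least $1 - |D_1|\cdot\frac{\delta}{|D_1|} = 1-\delta$, which is precisely the \edapproximation demanded by the Frequency Monitoring Problem.

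For the message complexity I would decompose the horizon of each value into the \emph{blocks} delimited by the restarts of \ContEpsFacS. The restart rule guarantees that a block spans at most $1/\delta$ time steps via the time-based criterion, and, because at most a $\sigma$-fraction of the nodes observing $v$ changes per step (so $n_t^+ + n_t^- \le \sigma\,\ntvsize$), the accumulated estimated number of entering and leaving nodes does not reach the threshold $\bar n_1/2$ before roughly $1/(2\sigma)$ steps. Hence every block covers at least $\min\{1/(2\sigma),1/\delta\}$ time steps, so over $T$ steps a single value is processed in at most $\OTheta{1+T\max\{2\sigma,\delta\}}$ blocks. By \cref{lemma:frequencies_multiple_step_complexity}, each block costs $\Theta\!\left(\frac{1}{\varepsilon^2}\log\frac{|D_1|}{\delta}\right)$ messages in expectation under the adjusted failure parameter, so by linearity of expectation the expected cost per value is $\Theta\!\left((1+T\max\{2\sigma,\delta\})\frac{1}{\varepsilon^2}\log\frac{|D_1|}{\delta}\right)$. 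Summing over the $|D_1|$ values and adding the dominated $\Theta(|D_1|)$ for the initial domain identification yields the stated bound.

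The hard part will be the block-counting argument of the last paragraph, namely converting the per-block statement of \cref{lemma:frequencies_multiple_step_complexity} into a bound on the \emph{number} of blocks actually incurred over the $T$ steps. Both the block lengths and the per-block costs are random, and the change-based restart is triggered by the sampled quantities $\bar n_t^+,\bar n_t^-$ rather than directly by $\sigma$; I would therefore have to show that, given the deterministic input guarantee $n_t^+ + n_t^- \le \sigma\,\ntvsize$, the estimated cumulative change stays below $\bar n_1/2$ for $\Omega(1/(2\sigma))$ steps in expectation, so that the number of blocks is $\OTheta{1+T\max\{2\sigma,\delta\}}$ and the per-block expectations compose cleanly across a bounded number of blocks. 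The remaining steps are routine combinations of the already-established single-value lemmas.
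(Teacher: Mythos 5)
Your proposal follows the paper's proof almost step for step: identify $D_1$ once via \protoneS, run \ContEpsFacS per value with failure parameter $\delta/|D_1|$ (which produces the $\log\frac{|D_1|}{\delta}$ factor), take a union bound over the $|D_1|$ values for correctness, and charge $\Theta\bigl(\frac{1}{\varepsilon^2}\log\frac{|D_1|}{\delta}\bigr)$ per value per interval using \cref{lemma:frequencies_multiple_step_complexity}, with $1+T\cdot\max\{2\sigma,\delta\}$ intervals. The one place you diverge is exactly the place you flag as the hard part, and it is worth noting how the paper dissolves it: rather than letting the intervals be the algorithm's own random restart blocks and then proving that the change-triggered restart (driven by the sampled quantities $\bar n_t^{\pm}$) does not fire before $\Omega(1/(2\sigma))$ steps, the paper simply partitions the $T$ steps \emph{deterministically} into $\lceil T/T'\rceil$ intervals of length $T'=\min\{\frac{1}{2\sigma},\frac{1}{\delta}\}$ and invokes \cref{lemma:frequencies_multiple_step_complexity} once per interval, so the number of intervals is a fixed quantity and no lower bound on random block lengths is needed. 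If you keep your version, the block-counting argument you defer is a genuine obligation, not routine: you would need a concentration bound showing the cumulative sampled change stays below $\bar n_1/2$ for $\Omega(T')$ steps with sufficiently high probability (or a Wald-type argument bounding the expected number of restarts), and you would also have to account for the extra $\Theta\bigl(\frac{1}{\varepsilon^2}\log\frac{1}{\delta}\bigr)$ cost of the fresh \EpsFacS call at each such restart. Adopting the paper's deterministic partition removes both issues and completes your proof.
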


\begin{proof}
The algorithm works by first applying \protone($nil$,$1$) to obtain $D_1$ and then applying \ContEpsFac($v$, $\varepsilon$, $\delta/|D_1|$) for every $v \in D_1$.
 By \cref{lemma:frequencies_multiple_step_correctness} we know that in every time step and for all $v\in D_1$, the frequency of $v$ is approximated up to a factor of $\varepsilon$ with probability $1-\delta/|D_1|$.
 We divide the $T$ time steps into intervals of size $T'=\min\{\frac{1}{2\sigma},\frac{1}{\delta}\}$ and perform \ContEpsFacS on each of them for every value $v \in D_1$.
 There are $\lceil \frac{T}{T'} \rceil \leq 1+T\cdot \max\{2\sigma,\delta\}$ such intervals.
 For each of those, by \cref{lemma:frequencies_multiple_step_complexity} we need $\Theta \left( \left(1+ \min\{\frac{1}{2\sigma},\frac{1}{\delta}\} \sigma \right) \cdot 1/\varepsilon^2 \log \frac{|D_1|}{\delta} \right)$ messages on expectation for each $v \in D_1$.
 This yields a complexity of $\Theta \left( |D_1| \left( 1+T\cdot \max\{2\sigma, \delta \} \right) \frac{1}{\varepsilon^2} \log \frac{|D_1|}{\delta} \right)$ due to $\min\{\frac{1}{2\sigma},\frac{1}{\delta}\} \sigma\leq \frac{1}{2\sigma} \cdot \sigma = \Theta(1)$.
 Using a union bound over the fail probability for every $v\in D_1$, a success probability of at least $1-\frac{|D_1|\delta}{|D_1|}=1-\delta$ follows.
 \hfill %\qed
\end{proof}

By \cref{theorem:epsilon_factor_approximation}, trivially repeating the single step algorithm \EpsFacS needs $\Theta \left( T |D_1| \frac{1}{\varepsilon^2}\log\frac{|D_1|}{\delta}\right)$ messages on expectation for $T$ (because the number of nodes in $\ntv$ for any $v \in D_1$ is at least $N_1^v/2$ in every time step of that interval).
Hence, the number of messages sent when using \ContEpsFac\ is reduced in the order of $\max\{2\sigma, \delta \}$.

\section{The Count Distinct Monitoring Problem}
\label{sec:countDistinct}

In this section we present an \edapproximation algorithm for the Count Distinct Monitoring Problem.
The basic approach is similar to the one presented in the previous section for monitoring the frequency of each value.
That is, we first estimate $|D_t|$ up to a (small) constant factor and then use the result to define a protocol for obtaining an $(\varepsilon,\delta)$-approximation.
If we could assume that, at any fixed time $t$, each value was observed by at most one node, it would be possible to solve this problem with expected communication cost of $O(\frac{1}{\varepsilon^2} \log \frac{1}{\delta})$ (per time step $t$ and per value $v \in D_t$) using the same approach as in the previous section.
Since this assumption is generally not true, we aim at simulating such behaviour that for each value in the domain only one random experiment is applied. 
We apply the concept of \emph{public coins}, which allows nodes measuring the same value to observe identical outcomes of their random experiments.
To this end, nodes have access to a shared random string $R$ of fully independent and unbiased bits.
This can be achieved by letting all nodes use the same pseudorandom number generator with a common starting seed, adding a constant number of messages to the bounds proven below.
We assume that the server sends a new seed in each phase by only loosing at most a constant factor in the amount of communication used.
However, we can drop this assumption by checking whether there are nodes that changed their value such that only in rounds in which there are changes new public randomness is needed. 
The formal description of the algorithm for a constant factor and an $\varepsilon$-approximation are given in \cref{alg:publicCoin1} and \cref{alg:publicCoin2}, respectively.

We consider the access of the public coin to behave as follows: 
Initialised with a seed, a node accesses the sequence of random bits $R$ bitwise, i.e. after reading the $j$'th bit, the node next accesses bit $j+1$. 
Observe the crucial fact that as long as each node accesses the exact same number of bits, each node observes the exact same random bits simultaneously. 
\cref{alg:publicCoin1} essentially works as follows:
In a first step, each node draws a number from a geometrical distribution using the public coin.
By this, all nodes observing the same value $v$ obtain the same height $h_v$.
In the second step we apply the strategy as in the previous section to reduce communication if lots of nodes observe the same value: Each node $i$ draws a number $g_i$ from a geometrical distribution without using the public coin.
Afterwards, all nodes with the largest height $g_i$ among those with the largest height $h_v$ broadcast their height $h_v$.

\begin{algorithm}[ht]
	\textbf{(Node $i$, observes value $v = v_i$)}
	\vspace{-3mm}
	\begin{enumerate}[noitemsep]
		\item Draw a random number $h_v$ as follows:\\
		Consider the next $\Delta \cdot \log n$ random bits $b_1, \ldots, b_{\Delta \cdot \log n}$ from $R$. 
		Let $h$ be the maximal number of bits $b_{v \cdot \log n + 1}, \ldots, b_{v \cdot \log n + 1+  h}$ that equal $0$.
		Define $h_v \coloneqq \min\{h,\log n\}$.
		\item Let $g'_i$ be a random value drawn from a geometric distribution with success-probability $p = 1/2$ and define $g_i = \min(g'_i, \log n)$ (without accessing public coins).
    	\item Broadcast drawn height $h_v$ in round $r = \log^2 n - (h_v - 1) \cdot \log n - g_i$ unless a node $i'$ has broadcasted before.
	\end{enumerate}

	\textbf{(Server)}
	\vspace{-3mm}
	\begin{enumerate}[noitemsep]
		\item Receive a broadcast message containing height $h$ in round $r$.
		\item Output $\hat{d}_t = 2 ^h$.
	\end{enumerate}
	\caption{\CoFac\hfill\textit{[for fixed time $t$]}}
	\label{alg:publicCoin1}
	\vspace{-3mm}
\end{algorithm}

Note that only (at most $n$) nodes that observe value $v$ with $h_v$ = $max_{v'} \, h_{v'}$ may send a message in \cref{alg:publicCoin1}.
Now, all nodes observing the same value observe the same outcome of their random experiments determining $h_v$.
Hence, by a similar reasoning as in \cref{lemma:constant_factor_approximation}, one execution of the algorithm uses $\bigO(1)$ messages on expectation.

Using the algorithm given in \cref{alg:publicCoin1} and applying the same idea as in the previous section, we obtain an $(\varepsilon, \delta)$-approximation as given in \cref{alg:publicCoin2}:
Each node tosses a coin with a success probability depending on the constant factor approximation (for which we have a result analogous to \cref{corollary:constant_factor_approximation_delta}).
Again, all nodes use the public coin so that all nodes observing the same value obtain the same outcome of this coin flip.
Afterwards, those nodes which have observed a success apply the same strategy as in the previous section, that is, they draw a random value from a geometric distribution, and the nodes having the largest height send a broadcast.

\begin{algorithm}[ht]
	\textbf{(Node $i$)}
	\vspace{-3mm}
	\begin{enumerate}[noitemsep]
		\item Flip a coin with success probability $p = 2^{-q} = \frac{ c \log 1/\delta} {\varepsilon^2 \hat{d}_t}$, $q \in \mathbb{N}$ as follows:\\
		Consider the next $\Delta \cdot q$ random bits $b_1, \ldots b_{\Delta \cdot q}$.
		The experiment is successful if and only if all random bits $b_{v \cdot q + 1}, \ldots, b_{v \cdot q + q}$ equal $0$.
		The node deactivates (and does not take part in Steps 2.\ and 3.) if the experiment was not successful.
		\item Draw a random value $h'_i$ from a geometric distribution and define $h_i = \min(h'_i, \log n)$ (without accessing public coins). 
		\item Node $i$ broadcasts its value in round $\log n - h_i$ unless a node $i'$ with $v^t_i = v^t_{i'}$ has broadcasted before.
	\end{enumerate}
	\textbf{(Server)}
	\vspace{-3mm}
	\begin{enumerate}[noitemsep]
		\item Let $S_t$ be the set of received values.
		\item Output $\tilde{d}_t \coloneqq |S_t| / p$
	\end{enumerate}
	\caption{\EpsFac \hfill\textit{[for fixed time $t$]}}
	\label{alg:publicCoin2}
	\vspace{-3mm}
\end{algorithm}
\newpage
Using arguments analogous to \cref{lemma:single_shot_epsilon,lemma:msg_epsilonApprox} and applying \EpsFacS for $T$ time steps, we obtain the following theorem.

\begin{theorem}
	There exists an $(\varepsilon, \delta)$-approximation algorithm for the Count Distinct Monitoring Problem for $T$ time steps using $\bigO(T \cdot \frac{1}{\varepsilon^2} \log \frac{1}{\delta})$ messages on expectation.
\end{theorem}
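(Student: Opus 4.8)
The plan is to mirror the approach used for the Frequency Monitoring Problem in \cref{sec:frequencies}, but to perform the random experiments at the granularity of distinct \emph{values} rather than individual nodes. The public coin is the key device: since every node observing a value $v$ reads exactly the same bits of $R$ at the same positions, all such nodes obtain identical outcomes for $h_v$ in \cref{alg:publicCoin1} and for the coin flip in Step~1 of \cref{alg:publicCoin2}. Consequently each distinct value in $D_t$ behaves as a single entity with respect to the value-level randomness, and the task reduces to estimating the cardinality of a set of $|D_t|$ such entities---precisely the situation handled in \cref{sec:frequencies} with the role of $\ntvsize$ now played by $|D_t|$. The private geometric heights drawn \emph{without} the public coin serve only to suppress duplicate broadcasts among the (up to $n$) nodes sharing the winning value, exactly as in \cref{alg:p1}.

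For correctness I would first argue, analogously to \cref{lemma:constant_factor_approximation} and \cref{corollary:constant_factor_approximation_delta}, that \cref{alg:publicCoin1} outputs $\hat d_t \in [|D_t|/8, 8|D_t|]$ with constant probability, amplified to $1-\delta'$ by $\Theta(\log\frac{1}{\delta'})$ parallel copies and taking the median. Here the relevant variables are the per-value heights $h_v$, of which there are exactly $|D_t|$, so the union and product bounds of \cref{lemma:constant_factor_approximation} transfer directly. Conditioned on a good estimate, the sampling probability $p=\frac{c\log(1/\delta)}{\varepsilon^2\hat d_t}=\Theta\paren{\frac{1}{|D_t|}\cdot\frac{1}{\varepsilon^2}\log\frac{1}{\delta}}$ selects each of the $|D_t|$ distinct values independently, so $\E[|S_t|]=|D_t|\,p$ and a Chernoff bound as in \cref{lemma:single_shot_epsilon} gives $\tilde d_t=|S_t|/p\in[(1-\varepsilon)|D_t|,(1+\varepsilon)|D_t|]$ with probability $1-\delta$.

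For the message complexity I would follow \cref{lemma:msg_epsilonApprox}. When the constant-factor step succeeds, the expected number of \emph{selected} values is $|D_t|\,p=\Theta(\frac{1}{\varepsilon^2}\log\frac{1}{\delta})$, and within each selected value the private geometric experiment guarantees only $\bigO(1)$ broadcasts by the reasoning of \cref{lemma:protone_message_complexity}; together with the $\Theta(\log\frac{1}{\delta'})$ messages of the constant-factor estimation this yields $\Theta(\frac{1}{\varepsilon^2}\log\frac{1}{\delta})$ messages per step. The only case needing care is an underestimate $\hat d_t\ll|D_t|$, which inflates $p$; but the probability that $\hat d_t<|D_t|/(8\cdot 2^i)$ decays like $e^{-2^{i+3}}$, so the weighted sum $\sum_i 2^{i}e^{-2^{i+3}}$ converges and the expectation remains $\Theta(\frac{1}{\varepsilon^2}\log\frac{1}{\delta})$, exactly as in \cref{lemma:msg_epsilonApprox}.

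Finally, running this single-step procedure independently in each of the $T$ time steps gives total expected communication $\bigO(T\cdot\frac{1}{\varepsilon^2}\log\frac{1}{\delta})$. Unlike the Frequency Monitoring Problem, no union bound over the values of $D_t$ is needed within a step, since each step estimates the single quantity $|D_t|$; the per-step guarantee of $1-\delta$ is therefore already the required $(\varepsilon,\delta)$-guarantee. I expect the main obstacle to lie in rigorously transferring \cref{lemma:constant_factor_approximation} to the public-coin setting: one must verify that the value-level heights $h_v$ are mutually independent across distinct values---they read the disjoint blocks $b_{v\cdot\log n+1},\dots,b_{(v+1)\cdot\log n}$ of $R$---and that the two-layer randomisation (public coin for value selection, private geometric for duplicate suppression) does not corrupt the quantity actually received, so that the broadcast height $h$ satisfies $2^{h}=\Theta(|D_t|)$.
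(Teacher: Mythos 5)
Your proposal follows exactly the route the paper takes (and in fact spells out more detail than the paper's one-line proof sketch): use the public coin to collapse all nodes sharing a value into a single value-level random experiment, obtain a constant-factor estimate of $|D_t|$ via \cref{alg:publicCoin1} by the argument of \cref{lemma:constant_factor_approximation}, refine it to an \edapproximation via the sampling step of \cref{alg:publicCoin2} with Chernoff and expectation bounds as in \cref{lemma:single_shot_epsilon,lemma:msg_epsilonApprox}, and repeat independently for each of the $T$ steps. The argument is correct, including the observations that disjoint blocks of $R$ give independence across values and that no per-value union bound is needed since only the single quantity $|D_t|$ is estimated.
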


\subsection*{A Parameterised Analysis}

In this section we consider the problem for multiple time steps and parameterise the analysis with respect to instances in which the domain does not change arbitrarily between consecutive time steps.
Recall that for monitoring the frequency from a time step $t-1$ to the current time step $t$, all nodes that left and all nodes that entered toss a coin to estimate the number of changes. 
However, to identify that a node observes a value which was not observed in the previous time step, the domain has to be determined exactly. 

We apply the following idea instead: For each value $v \in \{1, \ldots, \Delta\}$ we flip a (public) coin. 
We denote the set of values with a successful coin flip as the \emph{sample}.  
Afterwards, the algorithm only proceeds on the values of the sample, i.e.\ in cases in which a node observes a value with a successful coin flip and no node observed this value in previous time steps, this value contributes to the estimate $\tilde d_t^+$ at time $t$.
Regarding the (sample) of nodes that leave the set of observed values, the \DoMonS algorithm is applied to identify which (sampled) values are not observed any longer (and thus contribute to $\tilde d_t^-$). 
%By this and as long as the domain changes not arbitrarily between consecutive time steps, we can achieve similar bounds as in the previous section.
\begin{algorithm}
	\begin{enumerate}[noitemsep]
		\item Compute $\delta'= 2\,\delta^2$
		\item \label[step]{alg:multiple_step_epsilon:start} 
			Broadcast a new seed value for the public coin. 
		\item Compute an $(\varepsilon, \delta')$-approximation $\tilde d_1$ of $|D_1|$ using \cref{alg:publicCoin2}. 
		Furthermore, obtain the success-probability $p$.
		\item Repeat for each time step $t>1$:
		\begin{enumerate}[noitemsep]
			\item Each node $i$ applies \cref{alg:domainMonitoring} if the observed value $v_i$ is in the sample set.
			Let $\hat d_t^-$ be the number of values (in sample set) which left the domain and $\hat d_t^+$ the number of nodes that join the sample. 

			\item Server computes $\tilde d_t=\tilde d_1 + \sum_{i=2}^t \hat d_i^+/p - \sum_{i=2}^t \hat d_i^-/p$.
			\item Break if $t=1/\delta$ or $\left(\sum_{i=2}^t \tilde d_i^+ + \sum_{i=2}^t \tilde d_i^-\right)/p$ exceeds $\tilde d_1/2$.
		\end{enumerate}
		\item Set $t=1$ and go to \cref{alg:multiple_step_epsilon:start}.
	\end{enumerate}
	\caption{\ContEpsFac($\varepsilon, \delta$)}
	\label{alg:frequencies_multiple_step_epsilon}
	\vspace{-3mm}
\end{algorithm}
Analogous to \cref{lemma:frequencies_multiple_step_correctness}, we have the following lemma.

\begin{lemma}
	\label{lemma:multiple_step_correctness}
	\ContEpsFacS achieves an ($\varepsilon$,$\delta$)-approximation of $|D_t|$ in any time step $t$.
\end{lemma}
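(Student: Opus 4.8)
The plan is to mirror the proof of \cref{lemma:frequencies_multiple_step_correctness}, replacing the per-node sampling used there by the per-value public-coin sampling of \cref{alg:publicCoin2}. Write $S \coloneqq \{v \in \{1,\ldots,\Delta\} \mid \text{the public coin for } v \text{ succeeds}\}$ for the sample of values, which stays fixed for the duration of one interval. Since each value is included in $S$ independently with probability $p$, the quantity $|S \cap D_t|$ is a sum of $|D_t|$ independent $\{0,1\}$-variables, each equal to $1$ with probability $p$. First I would record that, by the argument behind \cref{alg:publicCoin2} (analogous to \cref{lemma:single_shot_epsilon}), the initial output $\tilde d_1 = |S \cap D_1|/p$ is an \edapproximation of $|D_1|$ with parameter $\delta'$ in place of $\delta$, and that the server additionally learns the probability $p$.

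The key step is to show that the incrementally maintained estimate coincides with the direct one, i.e.\ $\tilde d_t = |S \cap D_t|/p$ for every $t$ in the interval. For this I would argue that the departure and arrival counters track $S$ exactly: running \DoMonS only on sampled values, $\hat d_i^-$ equals the number of values of $S$ present at time $i-1$ but absent at time $i$ (each detected once through its representative, at the end of its phase), while $\hat d_i^+$ equals the number of values of $S$ that newly enter the domain at time $i$. Hence $|S \cap D_i| - |S \cap D_{i-1}| = \hat d_i^+ - \hat d_i^-$, and telescoping from the base case gives $|S \cap D_t| = |S \cap D_1| + \sum_{i=2}^t (\hat d_i^+ - \hat d_i^-)$. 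Dividing by $p$ reproduces exactly the server's output in \cref{alg:frequencies_multiple_step_epsilon}, establishing $\tilde d_t = |S \cap D_t|/p$.

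With this identity in hand, concentration follows as in \cref{lemma:single_shot_epsilon}: $\tilde d_t$ is an unbiased estimator of $|D_t|$, and a Chernoff bound on the sum $|S \cap D_t|$ of independent trials shows that $\tilde d_t$ is an \edapproximation of $|D_t|$ (with parameter $\delta'$), provided $p = \OOmega{\frac{1}{\varepsilon^2 |D_t|}\log \frac{1}{\delta'}}$. Since $p$ is fixed from the constant-factor approximation of $|D_1|$, I would invoke the restart criterion of \cref{alg:frequencies_multiple_step_epsilon}: as long as the estimated number of changes stays below $\tilde d_1/2$, the current domain size $|D_t|$ remains within a constant factor of $|D_1|$, so the chosen $p$ is still large enough and the Chernoff bound applies unchanged.

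Finally I would take a union bound over the at most $1/\delta$ time steps of one interval; with $\delta' = 2\delta^2$ this keeps the total failure probability small enough that every $\tilde d_t$ is an $\varepsilon$-approximation with probability at least $1-\delta$. The main obstacle is the middle step---establishing the telescoping identity rigorously---since it requires arguing that \DoMonS, restricted to the sampled values, neither misses nor double-counts a departing value (each is accounted for through a single representative) and that arrival detection on sampled values is likewise exact. Once $\tilde d_t = |S \cap D_t|/p$ is secured, the remaining distributional analysis is essentially a verbatim repeat of \cref{lemma:single_shot_epsilon,lemma:frequencies_multiple_step_correctness}.
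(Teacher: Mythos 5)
Your proposal is correct and follows essentially the same route as the paper, which gives no explicit proof here but simply declares the lemma ``analogous to'' \cref{lemma:frequencies_multiple_step_correctness}; your writeup is a faithful expansion of exactly that intended argument (initial \edapproximation via the public-coin sampling, the telescoping identity $\tilde d_t = |S\cap D_t|/p$, Chernoff under the restart criterion, and a union bound over the at most $1/\delta$ steps). If anything, your reduction of the incremental estimate to the single fixed sample $S$ makes the concentration step cleaner than the paper's corresponding argument for frequencies.
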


For the number of messages, we argue based on the previous section.
However, in addition the \DoMonS algorithm is applied. 
Observe that the size of the domain changes by at most $n/2$, and consider the case that this number of nodes observed the same value $v$. 
The expected cost (where the expectation is taken w.r.t.\ whether $v$ is within the sample) is $\bigO ( \log n \cdot R^* \cdot p) = \bigO \bigl(\frac{\log n \cdot R^*}{|D_t| \varepsilon^2} \log \frac{1}{\delta} \bigr)$.
Similar to \cref{th:fmptheorem}, we then obtain the following theorem.
\begin{theorem}
	\ContEpsFacS provides and $(\varepsilon,\delta)$-approxima\-tion for the Count Distinct Monitoring Problem for $T$ time steps using an amount of $\Theta \left( \left( 1+T\cdot \max\{2\sigma, \delta \} \right) \frac{\log (n) \cdot R^*}{|D_t| \cdot \varepsilon^2} \log \frac{1}{\delta} \right)$ messages on expectation, if $\sigma \leq 1/2$.
\end{theorem}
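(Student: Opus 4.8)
The plan is to mirror the proof of \cref{th:fmptheorem}, substituting the global count-distinct estimator of \cref{alg:frequencies_multiple_step_epsilon} for the per-value frequency estimator and value-sampling (via public coins) for node-sampling. Correctness requires no new work: \cref{lemma:multiple_step_correctness} already guarantees that \ContEpsFacS outputs an $(\varepsilon,\delta)$-approximation of $|D_t|$ in every time step, so the entire argument reduces to bounding the expected communication.

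For the message count I would partition the $T$ time steps into intervals of length $T'=\min\{\frac{1}{2\sigma},\frac{1}{\delta}\}$, on each of which the algorithm restarts at \cref{alg:multiple_step_epsilon:start}. Since $1/T'=\max\{2\sigma,\delta\}$, exactly as in \cref{th:fmptheorem} the number of intervals is $\lceil T/T'\rceil\le 1+T\cdot\max\{2\sigma,\delta\}$, which yields the leading factor of the claimed bound. It then suffices to show that a single interval costs $\Theta\bigl(\frac{\log(n)\cdot R^*}{|D_t|\,\varepsilon^2}\log\frac{1}{\delta}\bigr)$ messages on expectation.

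Inside one interval there are two contributions. The initialisation runs \cref{alg:publicCoin2} to compute $\tilde d_1$ together with its sampling probability $p$; by the arguments analogous to \cref{lemma:single_shot_epsilon,lemma:msg_epsilonApprox}, and using the public coin so that each sampled value is reported with only $\bigO(1)$ messages, this fixes $p=\Theta\bigl(\frac{1}{|D_1|\varepsilon^2}\log\frac{1}{\delta}\bigr)$ and costs $\bigO(\frac{1}{\varepsilon^2}\log\frac{1}{\delta})$ messages. The second, dominating contribution is the per-step application of \DoMonS (\cref{alg:domainMonitoring}) restricted to the sampled values, which tracks entering and leaving values. Here I would invoke the parameterised domain bound of $\bigO(\log n\cdot R^*)$; since a value incurs \DoMonS cost only when it lies in the sample, taking expectation over the public coin scales this by $p$, giving $\bigO(\log n\cdot R^*\cdot p)=\bigO\bigl(\frac{\log n\cdot R^*}{|D_t|\,\varepsilon^2}\log\frac{1}{\delta}\bigr)$ per interval. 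Multiplying the per-interval cost by the number of intervals then produces the stated bound.

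The hard part will be ensuring that the fixed probability $p$ stays correct throughout an interval. Because $p$ is chosen from $\tilde d_1\approx|D_1|$ once at the start, both the $\bigO(1)$-messages-per-sampled-value estimate and the identity $p=\Theta(\frac{1}{|D_t|\varepsilon^2}\log\frac{1}{\delta})$ are only valid while $|D_t|=\Theta(|D_1|)$. This is precisely what the break condition of \cref{alg:frequencies_multiple_step_epsilon} secures: restarting as soon as the accumulated estimated change $\bigl(\sum\tilde d_i^+ +\sum\tilde d_i^-\bigr)/p$ exceeds $\tilde d_1/2$ keeps the domain size within a constant factor of $|D_1|$, while $\sigma\le 1/2$ bounds how fast that threshold can be reached relative to $T'$. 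As in \cref{lemma:msg_epsilonApprox}, I would additionally absorb the low-probability event of a bad constant-factor estimate (an underestimate would inflate $p$ and hence communication) by noting that this probability decays geometrically in the overestimation factor, contributing only a constant multiplicative overhead in expectation. A final subtlety to check is the bookkeeping of $R^*$ across intervals, since $R^*$ counts representative changes globally; bounding each interval's changes by $R^*$ suffices for the claimed upper bound.
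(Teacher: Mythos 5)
Your proposal is correct and follows essentially the same route as the paper's own (much terser) argument: correctness is delegated to the approximation lemma for $|D_t|$, the $T$ steps are split into $\lceil T/T'\rceil \le 1+T\cdot\max\{2\sigma,\delta\}$ intervals exactly as in the frequency-monitoring theorem, and the dominant per-interval cost is the \DoMonS work on sampled values, which in expectation over the public coin is $\bigO(\log n\cdot R^*\cdot p)=\bigO\bigl(\frac{\log n\cdot R^*}{|D_t|\,\varepsilon^2}\log\frac{1}{\delta}\bigr)$. Your additional remarks on the break condition keeping $p$ valid and on the $R^*$ bookkeeping across intervals fill in details the paper leaves implicit but do not change the approach.
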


\bibliographystyle{plain}
\bibliography{bibliography}

\end{document}